\documentclass{article}
\pdfoutput=1

\usepackage[a4paper]{geometry}
\usepackage[T1]{fontenc}
\usepackage[utf8]{inputenc}
\usepackage{lmodern}
\usepackage{tikz}
\usetikzlibrary{calc,patterns,arrows.meta}
\usepackage{amsmath,amssymb}
\usepackage{amsthm}
\usepackage{subcaption}
\usepackage{booktabs}
\usepackage{xcolor}

\newtheorem{theorem}{Theorem}

\newtheorem*{theorem*}{Theorem}

\theoremstyle{remark}

\definecolor{col1}{RGB}{100,143,255}
\definecolor{col2}{RGB}{120, 94, 240}
\definecolor{col3}{RGB}{254,97,0}
\definecolor{col4}{RGB}{220, 38, 127}
\definecolor{col5}{RGB}{255, 176, 0}

\usepackage{hyperref}
\usepackage[backend=biber, sorting=none, maxnames=99, minnames=99, giveninits=true]{biblatex}

\newcommand{\dd}{\text{d}}
\newcommand{\tr}{\mathrm{tr}}
\newcommand{\ren}{\mathrm{ren}}

\newcommand{\Aut}{\mathrm{Aut}}

\newcommand{\R}{\mathbb{R}}
\newcommand{\Q}{\mathbb{Q}}

\newcommand{\Z}{\mathbb{Z}}

\newcommand{\HB}{\mathcal{H}}

\newcommand{\vertextwo}{%
\tikz[baseline=-0.6ex,scale=0.62,line cap=round,line join=round]{
  \coordinate (v) at (0,0);
  \foreach \ang in {175,195} {
    \draw (v) -- ++(\ang:0.95);
  }

  \filldraw[draw=black,thick] (v) circle (2pt);
}}

\newcommand{\vertexfour}{%
\tikz[baseline=-0.6ex,scale=0.62,line cap=round,line join=round]{
  \coordinate (v) at (0,0);
  \foreach \ang in {135,155,205,225} {
    \draw (v) -- ++(\ang:0.95);
  }

  \filldraw[draw=black,thick] (v) circle (2pt);
}}

\newcommand{\vertextwoloop}{%
\tikz[baseline=-0.6ex,scale=0.62,line cap=round,line join=round]{
  \coordinate (v) at (0,0);

  \foreach \ang in {175,195} {
    \draw (v) -- ++(\ang:0.95);
  }

  \draw (v)
    .. controls (0.45,0.38) and (1.05,0.32) .. (1.05,0)
    .. controls (1.05,-0.32) and (0.45,-0.38) .. (v);

  \filldraw[draw=black,thick] (v) circle (2pt);
}}

\newcommand{\vertextwolooptwolegs}{%
\tikz[baseline=-0.6ex,scale=0.62,line cap=round,line join=round]{
  \coordinate (v) at (0,0);

  \foreach \ang in {135,155,205,225} {
    \draw (v) -- ++(\ang:0.95);
  }

  \draw (v)
    .. controls (0.45,0.38) and (1.05,0.32) .. (1.05,0)
    .. controls (1.05,-0.32) and (0.45,-0.38) .. (v);

  \filldraw[draw=black,thick] (v) circle (2pt);
}}

\newcommand{\vertexloop}{%
\tikz[baseline=-0.6ex,scale=0.62,line cap=round,line join=round]{
  \coordinate (v) at (0,0);

  \draw (v)
    .. controls (0.45,0.38) and (1.05,0.32) .. (1.05,0)
    .. controls (1.05,-0.32) and (0.45,-0.38) .. (v);

  \filldraw[draw=black,thick] (v) circle (2pt);
}}

\newcommand{\vertextwolooptwo}{%
\tikz[baseline=-0.6ex,scale=0.62,line cap=round,line join=round]{
  \coordinate (v) at (0,0);

  \foreach \ang in {175,195} {
    \draw (v) -- ++(\ang:0.95);
  }

  \draw (v)
    .. controls (0.40,0.58) and (1.10,0.60) .. (1.18,0.25)
    .. controls (1.05,-0.03) and (0.42,0.16) .. (v);

  \draw (v)
    .. controls (0.42,-0.16) and (1.05,0.03) .. (1.18,-0.25)
    .. controls (1.10,-0.60) and (0.40,-0.58) .. (v);

  \filldraw[draw=black,thick] (v) circle (2pt);
}}

\newcommand{\vertextwolooptwotwolegs}{%
\tikz[baseline=-0.6ex,scale=0.62,line cap=round,line join=round]{
  \coordinate (v) at (0,0);

  \foreach \ang in {135,155,205,225} {
    \draw (v) -- ++(\ang:0.95);
  }

  \draw (v)
    .. controls (0.40,0.58) and (1.10,0.60) .. (1.18,0.25)
    .. controls (1.05,-0.03) and (0.42,0.16) .. (v);

  \draw (v)
    .. controls (0.42,-0.16) and (1.05,0.03) .. (1.18,-0.25)
    .. controls (1.10,-0.60) and (0.40,-0.58) .. (v);

  \filldraw[draw=black,thick] (v) circle (2pt);
}}

\newcommand{\bubbletwolegs}{%
\tikz[baseline=-0.6ex,scale=0.62,line cap=round,line join=round]{
  \coordinate (l) at (0,0);
  \coordinate (r) at (1.15,0);

  \foreach \ang in {135,225} {
    \draw (l) -- ++(\ang:0.95);
  }

  \draw (l) to[bend left=55] (r);
  \draw (l) to[bend right=55] (r);

  \filldraw[draw=black,thick] (l) circle (2pt);
  \filldraw[draw=black,thick] (r) circle (2pt);
}}

\newcommand{\bubblefourlegs}{%
\tikz[baseline=-0.6ex,scale=0.62,line cap=round,line join=round]{
  \coordinate (l) at (0,0);
  \coordinate (r) at (1.15,0);

  \foreach \ang in {135,165,195,225} {
    \draw (l) -- ++(\ang:0.95);
  }

  \draw (l) to[bend left=55] (r);
  \draw (l) to[bend right=55] (r);

  \filldraw[draw=black,thick] (l) circle (2pt);
  \filldraw[draw=black,thick] (r) circle (2pt);
}}

\newcommand{\bubbletwotwolegs}{%
\tikz[baseline=-0.6ex,scale=0.62,line cap=round,line join=round]{
  \coordinate (l) at (0,0);
  \coordinate (r) at (1.15,0);

  \foreach \ang in {135,225} {
    \draw (l) -- ++(\ang:0.95);
  }
  \foreach \ang in {45,315} {
    \draw (r) -- ++(\ang:0.95);
  }

  \draw (l) to[bend left=55] (r);
  \draw (l) to[bend right=55] (r);

  \filldraw[draw=black,thick] (l) circle (2pt);
  \filldraw[draw=black,thick] (r) circle (2pt);
}}

\newcommand{\sunrisetwolegs}{%
\tikz[baseline=-0.6ex,scale=0.62,line cap=round,line join=round]{
  \coordinate (l) at (0,0);
  \coordinate (r) at (1.15,0);

  \draw (l) -- ++(158:0.95);
  \draw (r) -- ++(22:0.95);

  \draw (l) to[bend left=55] (r);
  \draw (l) -- (r);
  \draw (l) to[bend right=55] (r);

  \filldraw[draw=black,thick] (l) circle (2pt);
  \filldraw[draw=black,thick] (r) circle (2pt);
}}

\newcommand{\bubbletadpolelegsright}{%
\tikz[baseline=-0.6ex,scale=0.62,line cap=round,line join=round]{
  \coordinate (l) at (0,0);
  \coordinate (r) at (1.15,0);

  \draw (l)
    .. controls (-0.45,0.38) and (-1.05,0.32) .. (-1.05,0)
    .. controls (-1.05,-0.32) and (-0.45,-0.38) .. (l);

  \draw (l) to[bend left=55] (r);
  \draw (l) to[bend right=55] (r);

  \foreach \ang in {45,315} {
    \draw (r) -- ++(\ang:0.95);
  }

  \filldraw[draw=black,thick] (l) circle (2pt);
  \filldraw[draw=black,thick] (r) circle (2pt);
}}

\newcommand{\bubbletadpolelegsleft}{%
\tikz[baseline=-0.6ex,scale=0.62,line cap=round,line join=round]{
  \coordinate (l) at (0,0);
  \coordinate (r) at (1.15,0);

  \draw (l)
    .. controls (-0.38,0.45) and (-0.32,1.02) .. (0,1.02)
    .. controls (0.32,1.02) and (0.38,0.45) .. (l);

  \draw (l) to[bend left=55] (r);
  \draw (l) to[bend right=55] (r);

  \foreach \ang in {160,200} {
    \draw (l) -- ++(\ang:0.95);
  }

  \filldraw[draw=black,thick] (l) circle (2pt);
  \filldraw[draw=black,thick] (r) circle (2pt);
}}

\newcommand{\vertexthreelooptwo}{%
\tikz[baseline=-0.6ex,scale=0.62,line cap=round,line join=round]{
  \coordinate (v) at (0,0);

  \foreach \ang in {175,195} {
    \draw (v) -- ++(\ang:0.95);
  }

  \draw (v)
    .. controls (0.30,0.62) and (1.02,0.80) .. (1.20,0.44)
    .. controls (1.14,0.12) and (0.40,0.26) .. (v);

  \draw (v)
    .. controls (0.45,0.30) and (1.22,0.26) .. (1.28,0)
    .. controls (1.22,-0.26) and (0.45,-0.30) .. (v);

  \draw (v)
    .. controls (0.40,-0.26) and (1.14,-0.12) .. (1.20,-0.44)
    .. controls (1.02,-0.80) and (0.30,-0.62) .. (v);

  \filldraw[draw=black,thick] (v) circle (2pt);
}}

\newcommand{\vertextworen}{%
\tikz[baseline=-0.6ex,scale=0.62,line cap=round,line join=round]{
  \coordinate (v) at (0,0);
  \foreach \ang in {175,195} {
    \draw (v) -- ++(\ang:0.95);
  }

  \filldraw[fill=white,draw=black,thick] (v) circle (0.11);
}}

\newcommand{\vertexfourren}{%
\tikz[baseline=-0.6ex,scale=0.62,line cap=round,line join=round]{
  \coordinate (v) at (0,0);
  \foreach \ang in {135,155,205,225} {
    \draw (v) -- ++(\ang:0.95);
  }

  \filldraw[fill=white,draw=black,thick] (v) circle (0.11);
}}

\hypersetup{
  pdfauthor={Michael Borinsky, Jury Radkovski, Suriyah Rajalingam Kannagi}, 
  pdftitle={Tropical QFT and the sine-Gordon model},
  pdfsubject={},
  pdfkeywords={},
  linkcolor  = col2,
  citecolor  = col4,
  urlcolor   = col5,
  colorlinks = true,
}

\title{Tropical QFT and the sine-Gordon model}
\author{Michael Borinsky$^{a}$\footnote{mborinsky@perimeterinstitute.ca} \and Jury Radkovski$^{a,b,c}$\footnote{j.radkovski@nbi.ku.dk} \and Suriyah Rajalingam Kannagi$^{a,d}$\footnote{srajalingamkannagi@perimeterinstitute.ca}~\\[2mm]
{\small\it $^a$Perimeter Institute for Theoretical Physics, 31 Caroline St N,  Waterloo, ON N2L 2Y5, Canada}\\
{\small \it $^b$Department of Physics and Astronomy, McMaster University,}\\
{\small \it 1280 Main St W, Hamilton, ON L8S 4M1, Canada}\\ 
{\small\it $^c$Center of Gravity, Niels Bohr Institute, Blegdamsvej 17, 2100 Copenhagen, Denmark}\\ 
{\small\it $^d$Department of Physics and Astronomy, University of Waterloo, Waterloo, ON N2L 3G1, Canada}
}
\date{}

\begin{document}

\maketitle
\begin{abstract}
A tropical quantum field theory is an exactly solvable deformation limit of an ordinary QFT, governed by a non-linear partial differential equation.  We use the integrable two-dimensional sine-Gordon theory as a testbench to compare the exact features of an undeformed model with its tropical counterpart.  We classify scale-invariant tropical QFTs in $D=2$ and leverage functional renormalization group methods to demonstrate that all scale-invariant theories with periodic effective actions are infrared fixed points of flows originating from sine-Gordon Lagrangians.  For the tropical sine-Gordon model, we compute the $\beta$-functions exactly, solve the renormalized theory both via a convergent expansion and numerically for all admissible couplings, and illustrate the complete renormalization group flow from the Gaussian to the IR fixed point.  Finally, we study the large-order behaviour of the loop expansion and relate it to the Hamilton--Jacobi equation, whose solutions we interpret as tropical instantons.
\end{abstract} 

\section{Introduction}
Quantum field theory (QFT) is highly successful in describing nature, but it is difficult to study. Interacting QFTs can only be solved in some very special cases, and the importance of discovering such special solutions was quickly realized after QFT was invented.

The first author recently introduced a class of theories that is solvable in a strong sense: \emph{tropical quantum field theories}~\cite{Borinsky:2025stz}. 
Tropical QFTs are deformation limits of ordinary QFTs. For instance, we deform a given scalar QFT with a real parameter $\xi > 0$ by replacing the spacetime dimension $D$ with $D \cdot \xi$ and the propagator $(\Box + m^2)^{-1}$ with $(\Box + m^2)^{-\xi}$. For $\xi = 1$, we recover the original QFT. In the \emph{tropical limit} $\xi \rightarrow 0^+$, the theory simplifies drastically, but it does not become trivial. The effective action becomes a function of the field and the coupling constants, and this function is fixed completely by a \emph{non-linear} partial differential equation, the \emph{tropical loop equation}, and appropriate boundary conditions, which in turn are fixed by the classical action. Tropical QFT is thus almost as easy (or hard) as quantum mechanics, which is governed by a \emph{linear} partial differential equation:  the Schr\"odinger equation.

Since tropical QFT is a deformation limit of QFT, we can compare it to the large-$N$ limit of colored theories. Viewed from a perturbative angle, the large-$N$ limit removes the non-planar Feynman graphs. In contrast, the tropical limit keeps all graphs, but replaces each Feynman integral by its tropical counterpart: the \emph{Hepp bound}~\cite{Panzer:2019yxl}, a rational function in $D$ with a pole wherever the integral diverges logarithmically. We can think of tropical QFT as a shadow of the original QFT: it keeps the singularity structure of all Feynman integrals in the spacetime dimension and (mostly) forgets their dependence on masses and momenta, provided that this dependence does not change the singularity structure of the integral.

One interesting application of this shadow to the original QFT is \emph{global tropical sampling}~\cite{Borinsky:2025stz}: the tropical solution gives a probability distribution on the Feynman graphs of a given loop order and their Schwinger parameters, and sampling from it allows one to estimate the full perturbative coefficient of the original QFT, without enumerating Feynman graphs. Remarkably, the algorithm needs only polynomial time and memory in the loop order and the number of legs. The evaluation of a single Feynman integral even in the scalar case, in Euclidean spacetime and using advanced methods takes exponential time~\cite{Borinsky:2020rqs}. The perturbative coefficients of massive scalar QFT in Euclidean spacetime thus become accessible at loop orders that, before, were out of reach by multiple orders of magnitude. As a proof of concept, the first author computed the primitive beta function of $\phi^4$ theory in $D=4$ at 50 loops~\cite{Borinsky:2025stz}.

Balduf and Panzer~\cite{Balduf:2025idm} used these tools to analyze renormalization and large-order behaviour of tropical $\phi^4$ theory in $D=4$. They found, for instance, that the renormalized tropical theory has no renormalon singularities in the minimal subtraction scheme, which suggests that the original $\phi^4$ theory shares this property in the same scheme. Their work also shows a second use of tropical QFT, central to this article: it is a testbench for QFT computations. The usual perturbative technology carries over, but the computations are generally simpler and usually hidden, non-perturbative, information becomes accessible.

Tropical geometry has already become an established tool for scattering amplitudes, Feynman integrals and quantum field theory in general. The field theory limit of string amplitudes can be understood as a tropical limit of the moduli space of curves \cite{wittenTropical,Tourkine:2013rda}.  Tropical and convex geometry play a central role in studying the amplituhedron \cite{Arkani-Hamed:2013jha,Arkani-Hamed:2017tmz,Arkani-Hamed:2019mrd} and scattering amplitudes in terms of moduli spaces of genus 0 tropical curves, also called moduli spaces of (phylogenetic) trees or tropical Grassmannians \cite{BILLERA2001733,MR2071813,Cachazo:2019ngv}. For individual Feynman integrals, tropical geometry facilitates the construction of local subtraction schemes \cite{Arkani-Hamed:2022cqe,Salvatori:2024nva,delaCruz:2024xsm}, most recently in the symbolic integration package SubTropica~\cite{Giroux:2026tgd}. Further, Mikhalkin pioneered the tropical approach to enumerative Gromov-Witten theory \cite{MR809718,Witten:1988xj,Kontsevich:1994qz,MR2137980}. Recently, this approach has been interpreted physically as the tropicalization of a sigma model~\cite{Albrychiewicz2024Tropological,Albrychiewicz2025TropicalBranes}.  The Hepp bound \cite{Panzer:2019yxl} and tropical sampling for Feynman integrals \cite{Borinsky:2020rqs,Borinsky:2023jdv,Borinsky:2025asc} are the direct precursors of tropical QFT. Important intermediate steps were large-loop order investigations of perturbation theory in scalar QFTs \cite{Balduf:2023ilc,Balduf:2024gvv,Balduf:2024njk,Borinsky:2025ncs}. Closest to our approach is the surfaceology program \cite{Arkani-Hamed:2023lbd,Arkani-Hamed:2024vna,Arkani-Hamed:2024pzc,De:2024wsy,Salvatori:2025oib}. There, amplitudes of colored theories are written as integrals over curves on a surface with integrands built from tropical functions. 

Section~\ref{sec:qfttr} of this article is a self-contained introduction to tropical QFT from the path integral point of view, with explicit example computations. The tropical loop equation holds beyond perturbation theory, so that we can ask non-perturbative questions, for instance about renormalization group fixed points. Here, a close connection to functional renormalization group technology \cite{Berges:2000ew,Nagy:2012ef,Dupuis:2020fhh} becomes useful. The tropical loop equation has the same form as the flow equation for the effective potential in the local potential approximation of the Wetterich equation \cite{Wetterich:1992yh,Litim:2001up}, and the tropical loop equation for the free energy has the form of the Polchinski equation~\cite{Polchinski:1983gv}. There, these equations are approximations, whereas in tropical QFT, they are exact. In Section~\ref{sec:trop_ren_group_fix}, we translate previous functional renormalization group results~\cite{Felder1987,Morris:1994jc,Nagy:2009pj,Codello2012} to statements about scale-invariant tropical QFTs. In two dimensions, the tropical loop equation for scale-invariant theories becomes an autonomous ordinary differential equation, which can be read as Newton's equation for a particle in a potential. Its real, periodic solutions form, up to a phase shift, a one-parameter family of oscillating effective actions, i.e.\ a continuum of scale-invariant tropical QFTs. 
Interestingly, Schaaf's theorem~\cite{MR814016}, which gives a criterion for the period of an oscillatory classical mechanical system to increase monotonically with energy, is important in clarifying the nature of scale-invariant tropical QFTs in $D=2$ with bounded effective action. 
For small oscillation amplitude, the effective action is a cosine, i.e.\ a sine-Gordon potential. 

For this reason, the second part of this paper starting from Section~\ref{sec:sinegordon} focuses on the $D=2$ tropical sine-Gordon model. Undeformed sine-Gordon theory is a well-studied two-dimensional model \cite{Coleman1975}. It is rigorously known that $D=2$ sine-Gordon theory exists~\cite{Frohlich:1975rcn}. In Section~\ref{sec:irfixedpoint}, we will show that general $D=2$ tropical QFTs and tropical sine-Gordon theory are closely related: every scale-invariant tropical QFT in $D=2$ with a real, periodic effective action is the IR fixed point of a renormalization group flow that starts from a bare sine-Gordon~Lagrangian. So, sine-Gordon Lagrangians classify the periodic scale-invariant tropical QFTs in $D=2$.

The tropical sine-Gordon model has the same UV divergences as the original model, and we remove them in the same way (Section~\ref{sec:renormalization}), paralleling Coleman's treatment~\cite{Coleman1975}. We compute the $\beta$-functions of the tropical model exactly. After the rescaling $g^2 \rightarrow g^2/(4\pi)$, they agree with the scheme-independent coefficients of the original model's $\beta$-functions~\cite{AmitGoldschmidtGrinstein1980}. In particular, the Coleman point $g^2 = 8\pi$, beyond which the original model has no ground state, corresponds to $g^2 = 2$ in the tropical model, which also has no solution beyond this point.

We solve the renormalized tropical model in several ways. The expansion in the renormalized mass-coupling $\alpha_\ren$ converges for small coupling if the other coupling constant is in the admissible range $0 < g^2 < 2$. We prove the convergence of this expansion rigorously in Theorem~\ref{thm:convergence} in Section~\ref{sec:solvingpde}. With this expansion as initial data, we integrate the renormalized tropical loop equation numerically up to arbitrarily large $\alpha_\ren$. This gives the effective action for all coupling values, and we can visually observe how the theory flows from the Gaussian fixed point in the UV to its scale-invariant IR fixed point, which corresponds to the limit $\alpha_\ren \rightarrow \infty$.

Exact information about the sine-Gordon theory is available: Lukyanov and Zamolodchikov~\cite{Lukyanov1996} conjectured a closed formula for the expectation values of powers of the field at the origin. In Section~\ref{subsec:comparison-sine-gordon}, we compare the tropical model with this formula and find qualitatively similar behaviour. This is consistent with findings by Daviet--Dupuis, who showed that the  Lukyanov--Zamolodchikov conjecture holds up well in a functional renormalization group analysis \cite{DavietDupuis2018}.

Finally, we study the loop expansion of the tropical sine-Gordon model and its large-order behaviour in Section~\ref{sec:largeorder}. In contrast to the mass-coupling expansion, the loop-expansion coefficients grow factorially. We compute the two- and four-point functions to 60 loops and numerically determine the leading singularity in the Borel plane. In Section~\ref{subsec:tropical-instanton}, we reproduce this number by an independent argument: at large loop order, the tropical loop equation turns into a Hamilton--Jacobi equation for a classical particle, and the Borel singularity is the classical action of a specific trajectory of this particle, which we call the \emph{tropical instanton}. The argument is heuristic, but the numerical agreement suggests that it can be made systematic. We postpone a deeper and rigorous analysis of the large-loop order behaviour of the tropical sine-Gordon model to future work.

\section{Tropical quantum field theory}
\label{sec:qfttr}
\subsection{Path integral formulation and the tropical loop equation}
\label{subsec:tropical_loop_eq} 

In this section, we briefly recall some key steps from \cite{Borinsky:2025stz}. We will work with a general 
scalar potential $V[\varphi] = \sum_{k \geq 2} \lambda_k \frac{\varphi^k}{k!}$.\footnote{The article~\cite{Borinsky:2025stz} only discussed cubic or higher interactions. Here, we also include additional quadratic terms.} 
For the sake of clarity, we note that the potential $V$ is defined with an extra negative sign compared to a standard QFT definition (see, e.g., Eqs. (7.25) and (7.27) in \cite{ZinnJustin2021}). This is to avoid an extra minus sign in the path integral weight $e^{-S}$.
The starting point for our tropical QFT analysis is to write a family of actions indexed by a real deformation parameter~$\xi >0$,
\begin{align} \label{eq:Sgeneral} S^\xi[\varphi, j] = \int_{\R^{D \cdot \xi}} \dd x^{D\cdot \xi} \left( \frac12 \varphi \,(\Box + m_0^2)^\xi \varphi - V[\varphi] - j \varphi \right) \, , \end{align}
with $\Box = - \partial_{\mu} \partial^{\mu}$, and $m_0^2$ a mass scale regulator parameter. Here, the spacetime\footnote{%
We use Euclidean conventions throughout this article. Often our statements translate verbatim to Minkowski spacetime signature; for instance, if all particles are massive or if we restrict to sufficiently generic kinematics.}
 dimension $D$ is an arbitrary parameter in the spirit of dimensional regularization \cite{tHooft:1972tcz} and the Wilson-Fisher approach \cite{Wilson:1971dc},
which we will assume to be real and non-negative.
A typical QFT with a standard Klein-Gordon propagator is obtained when $\xi =1$.
The \emph{partition functional} is formally defined by the path integral
\begin{align} Z^\xi[j] = \int \mathcal{D} [\varphi] \exp(-S^\xi[\varphi,j])\,. \end{align}
The free energy is $W^\xi[j] = \log Z^\xi[j],$ and
the effective action is %
\begin{align} \label{eq:effW} \Gamma^\xi[\varphi] = W^\xi[j] - \int \dd^{D\cdot \xi} x \, j(x) \varphi(x)\,, \end{align} with the (classical) field $\varphi$ understood to be the function of $j$, given by
\begin{equation}
\label{eq:phiofj}
    \varphi = \frac{\delta W^\xi}{\delta j} \, .
\end{equation}
As was shown in \cite{Borinsky:2025stz}, the limit $\xi \rightarrow 0^+$ exists and has a nice form. From the perspective of Eq.~\eqref{eq:Sgeneral}, one might naively expect that this limit will correspond to the well-studied 0-dimensional QFT. However, the $\xi \rightarrow 0^+$ limit is nontrivial and only coincides with 0-dimensional QFT if also $D=0$.
For general $D \geq 0$, but in the limit $\xi \rightarrow 0^+$, the effective action takes the form%
\begin{align} \label{eq:Gammaxitr} \Gamma^\xi[\varphi] = \int \dd x^{D\cdot \xi} \left( -\frac12 \varphi(x) \,(\Box + m_0^2)^{\xi} \varphi(x) + \Gamma^\tr( \varphi(x)) \right) + \text{(terms vanishing when $\xi \rightarrow 0^+$)} \, , \end{align}
where $\Gamma^\tr(\varphi)= \Gamma^\tr(D,\varphi, \lambda_2,\lambda_3, \lambda_4, \ldots)$ is the \emph{tropical effective action}, which is a function (not a functional) in the parameter $\varphi$, the coupling constants $\lambda_2,\lambda_3,\lambda_4,\ldots$
and the original spacetime dimension $D$. The tropical effective action is completely fixed by the \emph{tropical loop equation},
\begin{equation}
 \label{eq:Tr_Loop_Eq}
  \mathcal{P}_D \Gamma^{\mathrm{tr}} = 
  \bigl(1-\partial_\varphi^2\Gamma^{\mathrm{tr}}\bigr)^{-1}-1 \, ,
\end{equation}
where the differential operator $\mathcal P_D$ is given by
\begin{equation}
\label{eq:PD}
  \mathcal{P}_D
  =
  -D-\Bigl(1-\tfrac{D}{2}\Bigr)\varphi\,\partial_\varphi +\sum_{k\ge 2}\!\Bigl(k-D\bigl(\tfrac{k}{2}-1\bigr)\Bigr)\lambda_k\,\partial_{\lambda_k} \, ,
\end{equation}
and the boundary condition 
\begin{equation}
\label{eq:boundary_condition_general}
    \lim_{\hbar \rightarrow 0} \hbar \, \Gamma^{\text{tr}}(D,\hbar^{-\frac{1}{2}} \varphi, \lambda_2, \hbar^{1/2} \lambda_3,\hbar^{2/2} \lambda_4, \hbar^{3/2} \lambda_5, \ldots) = \sum_{k \geq 2} \lambda_k \frac{\varphi^k}{k!}\, .
\end{equation}

It was proven in \cite{Borinsky:2025stz} that the effective action satisfies the PDE~\eqref{eq:Tr_Loop_Eq} and the condition \eqref{eq:boundary_condition_general}, which together fix $\Gamma^\tr$ completely in the tropical limit. 

The name \emph{tropical} quantum field theory is geometric in origin: in the limit $\xi \rightarrow 0^+$, the Symanzik polynomials of each Feynman graph tropicalize, so that the graph's Feynman integral is governed by the \emph{tropical} piecewise-linear (max-plus) geometry of its Symanzik polynomials. See, for instance, \cite{maclagan2015introduction} for an introduction to tropical geometry.

The operator~\eqref{eq:PD} can be interpreted as a Callan--Symanzik operator. 
The constant contribution $-D$ accounts for the (classical) overall scaling of the effective action $\Gamma^\tr$.
Similarly, the coefficient in front of every derivative operator represents the classical scaling of each of the involved parameters.
Moreover, Eq.~\eqref{eq:Tr_Loop_Eq} allows us to connect to the realm of the \emph{functional renormalization group}. A similar equation  appears in the context of the local potential approximation of the \emph{Wetterich equation}~\cite{Wetterich:1992yh} (see, e.g.,~\cite{Codello2012}), which itself is of central interest in functional renormalization group theory.

\paragraph{A remark on the Fourier transformation}
Equation~(3) of \cite{Borinsky:2025stz} is a Fourier-transformed version of Eq.~\eqref{eq:Gammaxitr}. We briefly explain how to translate between them. Modulo an auxiliary mass scale parameter, the former equation is, 
\begin{equation}
    \frac{\delta^n \Gamma^\xi}{\delta \widetilde \varphi(p_1) \cdots \delta  \widetilde \varphi(p_n)} = \delta^{(D\cdot \xi)} \left(\sum_{i=1}^n p_i\right) \frac{\partial^n \Gamma^\tr}{\partial \varphi^n} \big|_{\varphi=0}( 1 + {o_\xi}(1))\, ,
\end{equation}
where ${o_\xi}(1)$ goes to $0$ when $\xi \rightarrow 0^+$,
and $\widetilde \varphi$  is the Fourier-transformed  field.
Inverse-Fourier transforming the $n$ momenta gives
\begin{gather} \begin{gathered} \int \frac{\dd^{D\cdot \xi} p_1}{(2\pi)^{D\cdot\xi}} e^{ip_1 x_1} \cdots \int \frac{\dd^{D\cdot \xi} p_n}{(2\pi)^{D\cdot\xi}} e^{ip_n x_n} \frac{\delta^n \Gamma^{\xi}}{\delta \widetilde \varphi(p_1) \cdots \delta \widetilde \varphi(p_n)} = \\
\left( \prod_{i=1}^{n-1} \delta^{(D\cdot \xi)}(x_i-x_n) \right) \frac{\partial^n \Gamma^{\text{tr}}}{\partial \varphi^n} \big|_{\varphi=0}( 1 + {o_\xi}(1))\,. \end{gathered} \end{gather}
Eq.~\eqref{eq:Gammaxitr} follows.

We note that the bare quadratic propagator term $\lambda_2 \varphi^2$ was not included in the expansion in Eq.~(3) of \cite{Borinsky:2025stz}.  However, the argument from \cite{Borinsky:2025stz} that derives Eq.~\eqref{eq:Tr_Loop_Eq} goes through without modification if we include this quadratic term.

We further remark that, just as ordinary QFT, tropical QFT suffers from IR divergences when $D\rightarrow 2$. For instance, the position-space propagator diverges when the mass regulator is removed. To see this, consider the
Euclidean propagator of the $\xi$-deformed QFT in position space and introduce Schwinger parameters 
\begin{gather} \begin{gathered} \int_{\R^{D\xi}} \frac{\dd^{D\xi}p}{(2\pi)^{D\xi}} \frac{e^{ipx}}{(p^2 +m_0^2)^\xi} = \frac{1}{\Gamma(\xi)} \int_0^\infty \frac{\dd \alpha}{\alpha} \alpha^\xi \int_{\R^{D\xi}} \frac{\dd^{D\xi}p}{(2\pi)^{D\xi}} \exp(ipx - \alpha(p^2+m_0^2))= \\
\frac{(4\pi)^{-D\xi/2}}{\Gamma(\xi)} \int_0^\infty \frac{\dd \alpha}{\alpha} \alpha^{\xi(1-D /2)} \exp( - \frac{x^2}{4\alpha} -\alpha m_0^2)\,. \end{gathered} \end{gather}
If $m_0^2 = 0$, $D \leq 2$ and $\xi > 0$, the integral is divergent. So, the propagator itself has an IR divergence in the massless case in low dimensions. Only in $D=2$ do we lose the (nontrivial) dependence  of the propagator on the deformation parameter $\xi$. The divergence will therefore clearly survive the tropical limit $\xi \rightarrow 0$.  We will postpone an in-depth discussion of these issues of tropicalization of $D \leq 2$ QFTs to future research. Here, we assume that $m_0^2 >0$ and that $D$ is a generic parameter before we take the tropical limit. After taking the tropical limit, we also study the limit $D\rightarrow 2$ if needed. 
\subsection{The tropical free energy and the tropical partition function}
\label{sec:freeenergy}
In \cite{Balduf:2025idm}, differential equations for the free energy and the partition function of $O(N)$-symmetric $\varphi^4$ theory in the tropical limit were derived from Eq.~\eqref{eq:Tr_Loop_Eq}.
In this section, we record the general version of these differential equations that hold for all scalar QFTs.

The tropical free energy $W^\tr(j)= W^\tr(D,j, \lambda_2, \lambda_3, \ldots)$ is the free energy in the tropical limit:
\begin{align} W^\xi[j] &= \int \dd x^{D\cdot \xi} \left(\frac12 j(x) \,(\Box + m_0^2)^{-\xi} j(x) + W^\tr( j(x)) + o_\xi(1) \right) \, . \end{align}
As before, $o_\xi(1)$ denotes terms that vanish when $\xi \rightarrow 0^+$. By the relationship~\eqref{eq:effW} to the effective action, we find
\begin{align} \Gamma^\tr \Big|_{\varphi = j + \frac{\partial W^\tr}{\partial j} } = W^\tr + \frac12 \left( \frac{ \partial W^\tr}{\partial j } \right)^2. \end{align}
Here, we used that $(\Box + m_0^2)^{\xi} \rightarrow 1$, when we evaluate it on propagators whose value is fixed by the external kinematics. %
In other words, the Feynman rules in the tropical limit
set all bridge propagators (i.e.~propagators which carry no nontrivial loop momentum) of a Feynman graph to 1.\footnote{Even though bridge propagators do not contribute, propagators with loop momentum flowing through them contribute nontrivially via the tropical Feynman rules (i.e.~the Hepp bound recursion).
So, $(\Box + m_0^2)^{\xi} \neq 1$ when computing virtual corrections.}
Inverting \eqref{eq:phiofj} to obtain $j$ as a function of $\varphi$ and differentiating \eqref{eq:effW} with respect to $\varphi$ we further obtain in the limit $\xi \rightarrow 0^+$,
\begin{align} \frac{ \partial \Gamma^\tr}{ \partial \varphi} \Biggr|_{\varphi = j + \frac{\partial W^\tr}{\partial j} } &= \frac{\partial W^\tr}{\partial j} \, , & \frac{ \partial^2 \Gamma^\tr}{ \partial \varphi^2} \Biggr|_{\varphi = j + \frac{\partial W^\tr}{\partial j} } &= 1- \frac{1}{1+\frac{\partial^2 W^\tr}{\partial j^2}} \, . \end{align}
From this and Eq.~\eqref{eq:Tr_Loop_Eq}, we get the tropical loop equation for the free energy\footnote{Here we use 
\begin{equation}
    \partial_{\lambda_k} \Gamma^{\xi} = \partial_{\lambda_{k}} W^{\xi} \Bigr|_{j = j(\varphi)} \Rightarrow \partial_{\lambda_k} \Gamma^\tr \Bigr|_{\varphi = j + \frac{\partial W^\tr}{\partial j} }= \partial_{\lambda_{k}} W^\tr \, ,  
\end{equation}
with $j(\varphi)$ obtained by inverting \eqref{eq:phiofj}. }
\begin{gather} \begin{gathered} \label{eq:Tr_Loop_Eq_W} \sum_{k\ge 2}\!\Bigl(k-D\bigl(\tfrac{k}{2}-1\bigr)\Bigr)\lambda_k\,\partial_{\lambda_k} W^\tr -D W^\tr- \left(\frac{\partial W^\tr}{\partial j}\right)^2 + \frac{D-2}{2} j \frac{\partial W^\tr}{\partial j} = \frac{\partial^2 W^\tr}{\partial j^2} \, . \end{gathered} \end{gather}
This PDE has the same form as the Polchinski equation in the functional renormalization group framework \cite{Polchinski:1983gv}. 
Felder proved rigorously that this precise equation holds in the  local potential approximation of QFT~\cite{Felder1987}.

Finally, passing to the partition function $Z^\tr = \exp( W^\tr)$ gives the PDE,
\begin{gather} \begin{gathered} \sum_{k\ge 2}\!\Bigl(k-D\bigl(\tfrac{k}{2}-1\bigr)\Bigr)\lambda_k\,\partial_{\lambda_k} Z^\tr -D Z^\tr \log Z^\tr + \frac{D-2}{2} j \frac{\partial Z^\tr}{\partial j} = \frac{\partial^2 Z^\tr}{\partial j^2}\,. \end{gathered} \end{gather}
In the $D \rightarrow 0$ limit, this PDE becomes linear. In fact, it reduces to the 
standard PDE satisfied by $0$-dimensional QFT. As pointed out in \cite{Balduf:2025idm}, 
this shows that tropical QFT is a non-linear deformation of $0$-dimensional QFT: 
the non-linearity stems entirely from the $\log$ term.

\subsection{Tropical perturbation theory}
\label{sec:perturbation}
As in traditional QFTs, the partition function, the free energy, and the effective action of the $\xi$-deformed QFT can be expanded perturbatively using Feynman graphs. For example, the $\xi$-deformed free energy expands as
\begin{align} \label{eq:Wnontrop} \frac{\delta^n W^\xi}{\delta \widetilde j(p_1) \cdots \delta \widetilde j(p_n)} = (2\pi)^{D\cdot \xi} \delta^{(D\cdot \xi)}\left( \sum_{i=1}^n p_i \right) \sum_{ \substack{G, n(G) = n } } \frac{\prod_{v\in V_G} \lambda_{|v|}}{|\Aut(G)|} I^\xi_G(p_1,\ldots,p_n)\,, \end{align}
where we sum over all connected graphs $G$ with $n$ legs,
and arbitrarily many vertices of degree $2$ or higher. 
The term $\prod_{v\in V_G} \lambda_{|v|}$ is the coupling monomial 
associated with $G$ and $|\Aut(G)|$ is the symmetry factor.
The standard formula for the Feynman integral reads~(see, e.g., \cite{Weinzierl:2022eaz}),
\begin{align*} I^\xi_G(p_1,\ldots,p_n) &= (2\pi)^{-L(G)\cdot D\cdot \xi} \int \frac{\dd^{D\cdot \xi} k_1 \cdots \dd^{D\cdot \xi} k_{L(G)}} {\prod_{e\in E_G} (q_e^2+m_0^2)^{\xi}}\,, \end{align*}
where we integrate over the $L(G)$ \emph{loop} momenta of $G$ 
and $q_e$ is the total momentum flowing through the edge $e$.

\subsection{Computing tropical Feynman integrals}
\label{sec:tropfeynman}
In the tropical limit $\xi \rightarrow 0^+$, the Feynman integral 
tropicalizes to the \emph{Hepp bound}, which was first studied in \cite{Panzer:2019yxl}.
The Hepp bound takes the role of the tropicalized Feynman integral, because
\begin{align} \label{eq:hepplimit} \lim_{\xi \rightarrow 0^+} I^\xi_G(p_1,\ldots,p_n) = \HB_D(G)\,. \end{align}
See, e.g., Proposition~5 of \cite{Borinsky:2025stz} for an explicit proof.
An easy way to compute the tropical Feynman integral, i.e.~the Hepp bound, is the recursive formula
\begin{equation}
\label{eq:hepp_recursion}
  \HB_D(G) = \frac{1}{\omega_D(G)} \sum_{e\in E_G} \HB_D(G\setminus e) \, ,
\end{equation}
where 
$\omega_D(G) = |E_G|-\frac{D}{2} L(G)$ and one assumes the terminal condition $\HB_D( \emptyset ) =1$ where $\emptyset$ stands for any graph without edges. The tropical Feynman integral is a rational function in $D$, which has a pole whenever the original, non-tropical Feynman integral has a logarithmic UV singularity.

We emphasize that the tropical Feynman integral does not depend on the external kinematics $m_0^2, p_1,\ldots,p_n$ anymore, provided that $m_0^2 >0$ such that there are no IR singularities. The loss of dependence is the per-diagram manifestation of the homogenization phenomenon of the effective action in the tropical limit in Eq.~\eqref{eq:Gammaxitr}. IR singularities generally depend on the kinematic configuration. As singular structures survive the tropical limit, tropical theories with nontrivial IR behaviour will exhibit dependence on the kinematics.

Tropical QFT quantities can be expanded in terms of the tropical Feynman integral
analogously to the usual QFT expansions. For instance, we get by combining Eqs.~\eqref{eq:Wnontrop} with~\eqref{eq:hepplimit} while stripping off the overall momentum conservation,
\begin{align} \label{eq:FuncGraphExpanded} \frac{\partial^n W^\tr}{\partial j^n} \Biggr|_{j=0} = \sum_{ \substack{G, n(G) = n } } \frac{\prod_{v\in V_G} \lambda_{|v|}}{|\Aut(G)|} \HB_D(G) \,, \end{align}
where we sum over all connected graphs $G$ with $n$ legs.

Tropical QFT inherits many nontrivial features of traditional QFT, but 
has the key advantage of being much more computationally accessible. Hence, tropical 
QFT offers itself as an instructive testbench for QFT computations. In the remainder of this section, we will demonstrate this accessibility explicitly by computing a couple of Feynman diagrams. In traditional QFTs, such computations would require much more advanced technology.

One can show that the tropical Feynman integral respects the factorization identity,
\begin{align} \label{eq:heppfactor} \mathcal H_D\left( \begin{tikzpicture}[every path/.style={thick},baseline={(0,-.1)}] \def\rad{.6} \def\rud{.6} \def\radl{0} \coordinate (v1) at (-\rud,0); \coordinate (v2) at (\rud,0); \coordinate (v01) at (-\radl,0); \coordinate (v02) at (\radl,0); \filldraw[pattern color=col3,pattern=north east lines] (v1) circle(\rad); \filldraw[pattern color=col3,pattern=north east lines] (v2) circle(\rad); \fill (v01) circle(2pt); \fill (v02) circle(2pt); \node[fill=white,circle] at (v1) {$G_1$}; \node[fill=white,circle] at (v2) {$G_2$}; \end{tikzpicture} \right) = \mathcal H_D\left( \begin{tikzpicture}[every path/.style={thick},baseline={(0,-.1)}] \def\rad{.6} \def\rud{.8} \def\radl{.2} \coordinate (v1) at (-\rud,0); \coordinate (v2) at (\rud,0); \coordinate (v01) at (-\radl,0); \coordinate (v02) at (\radl,0); \filldraw[pattern color=col3,pattern=north east lines] (v1) circle(\rad); \filldraw[pattern color=col3,pattern=north east lines] (v2) circle(\rad); \fill (v01) circle(2pt); \fill (v02) circle(2pt); \node[fill=white,circle] at (v1) {$G_1$}; \node[fill=white,circle] at (v2) {$G_2$}; \end{tikzpicture} \right) = \mathcal H_D\left( \begin{tikzpicture}[every path/.style={thick},baseline={(0,-.1)}] \def\rad{.6} \def\rud{.8} \def\radl{.2} \coordinate (v1) at (-\rud,0); \coordinate (v01) at (-\radl,0); \filldraw[pattern color=col3,pattern=north east lines] (v1) circle(\rad); \fill (v01) circle(2pt); \node[fill=white,circle] at (v1) {$G_1$}; \end{tikzpicture} \right) \cdot \mathcal H_D\left( \begin{tikzpicture}[every path/.style={thick},baseline={(0,-.1)}] \def\rad{.6} \def\rud{.8} \def\radl{.2} \coordinate (v2) at (\rud,0); \coordinate (v02) at (\radl,0); \filldraw[pattern color=col3,pattern=north east lines] (v2) circle(\rad); \fill (v02) circle(2pt); \node[fill=white,circle] at (v2) {$G_2$}; \end{tikzpicture} \right)\,, \end{align}
which follows from the recursive definition~\eqref{eq:hepp_recursion} (see, e.g., \cite[Proposition~4]{Borinsky:2025stz}). 

We can compute the tropical Feynman integral of a single propagator graph using 
the recursion~\eqref{eq:hepp_recursion}
as a consistency check. We find that 
\begin{align} \HB_D \left( \begin{tikzpicture}[baseline={(0,-.1)}] \def\rud{.4} \def\radl{.6} \def\rad{.2} \coordinate (TR) at ( \rud, 0); \coordinate (TL) at (-\rud, 0); \draw (-\radl,-\rad) -- (TL); \draw ( \radl,-\rad) -- (TR); \draw ( \radl, \rad) -- (TR); \draw (-\radl, \rad) -- (TL); \draw (TR) -- (TL); \foreach \v in { TR, TL} \fill (\v) circle (1.6pt); \end{tikzpicture} \right) = \frac{1}{ \omega_D \left( \begin{tikzpicture}[baseline={(0,-.1)}] \def\rud{.12} \def\radl{.2} \def\rad{.04} \coordinate (TR) at ( \rud, 0); \coordinate (TL) at (-\rud, 0); \draw (-\radl,-\rad) -- (TL); \draw ( \radl,-\rad) -- (TR); \draw ( \radl, \rad) -- (TR); \draw (-\radl, \rad) -- (TL); \draw (TR) -- (TL); \foreach \v in { TR, TL} \fill (\v) circle (1pt); \end{tikzpicture} \right) } \HB_D \left( \begin{tikzpicture}[baseline={(0,-.1)}] \def\rud{.4} \def\radl{.6} \def\rad{.2} \coordinate (TR) at ( \rud, 0); \coordinate (TL) at (-\rud, 0); \draw (-\radl,-\rad) -- (TL); \draw ( \radl,-\rad) -- (TR); \draw ( \radl, \rad) -- (TR); \draw (-\radl, \rad) -- (TL); \draw (TR) -- (TL); \fill[white] (0,0) circle (.15); \foreach \v in { TR, TL} \fill (\v) circle (1.6pt); \end{tikzpicture} \right) = \frac{1}{ 1 - \frac{D}{2} \cdot 0} = 1 \,, \end{align}
where we used that the tropical Feynman integral of a graph without edges is $1$.

Slightly less trivially, we find for the box Feynman graph
\begin{align*} \HB_D \left( \begin{tikzpicture}[baseline={(0,-.1)}] \def\rud{.5} \def\radl{.7} \coordinate (BL) at (-\rud,-\rud); \coordinate (BR) at ( \rud,-\rud); \coordinate (TR) at ( \rud, \rud); \coordinate (TL) at (-\rud, \rud); \draw (-\radl,-\radl) -- (BL); \draw ( \radl,-\radl) -- (BR); \draw ( \radl, \radl) -- (TR); \draw (-\radl, \radl) -- (TL); \draw (BL) -- (BR); \draw (BR) -- (TR); \draw (TR) -- (TL); \draw (TL) -- (BL); \foreach \v in {BL, BR, TR, TL} \fill (\v) circle (1.6pt); \end{tikzpicture} \right) = \frac{1} {\omega_D \left( \begin{tikzpicture}[baseline={(0,-.1)}] \def\rud{.12} \def\radl{.2} \coordinate (BL) at (-\rud,-\rud); \coordinate (BR) at ( \rud,-\rud); \coordinate (TR) at ( \rud, \rud); \coordinate (TL) at (-\rud, \rud); \draw (-\radl,-\radl) -- (BL); \draw ( \radl,-\radl) -- (BR); \draw ( \radl, \radl) -- (TR); \draw (-\radl, \radl) -- (TL); \draw (BL) -- (BR); \draw (BR) -- (TR); \draw (TR) -- (TL); \draw (TL) -- (BL); \foreach \v in {BL, BR, TR, TL} \fill (\v) circle (1pt); \end{tikzpicture} \right) } \cdot 4 \cdot \HB_D \left( \begin{tikzpicture}[baseline={(0,-.1)}] \def\rud{.5} \def\radl{.7} \coordinate (BL) at (-\rud,-\rud); \coordinate (BR) at ( \rud,-\rud); \coordinate (TR) at ( \rud, \rud); \coordinate (TL) at (-\rud, \rud); \draw (-\radl,-\radl) -- (BL); \draw ( \radl,-\radl) -- (BR); \draw ( \radl, \radl) -- (TR); \draw (-\radl, \radl) -- (TL); \draw (BL) -- (BR); \draw (BR) -- (TR); \draw (TL) -- (BL); \foreach \v in {BL, BR, TR, TL} \fill (\v) circle (1.6pt); \end{tikzpicture} \right) = \frac{4}{4-\frac{D}{2} \cdot 1} \cdot \HB_D \left( \begin{tikzpicture}[baseline={(0,-.1)}] \def\rud{.4} \def\radl{.6} \def\rad{.2} \coordinate (TR) at ( \rud, 0); \coordinate (TL) at (-\rud, 0); \draw (-\radl,-\rad) -- (TL); \draw ( \radl,-\rad) -- (TR); \draw ( \radl, \rad) -- (TR); \draw (-\radl, \rad) -- (TL); \draw (TR) -- (TL); \foreach \v in { TR, TL} \fill (\v) circle (1.6pt); \end{tikzpicture} \right)^3 = \frac{4}{4-\frac{D}{2}}\,, \end{align*}
where we used the factorization property~\eqref{eq:heppfactor} in the second step. 

More generally, let $C_n$ be the cycle graph with $n$ edges/propagators. Analogously 
to the computation above, we find that 
\begin{align} \label{eq:HeppCycle} \HB_D(C_n) = \frac{1}{n-\frac{D}{2}} \sum_{e\in E_{C_n}} \HB_D(C_n\setminus e) = \frac{n}{n-\frac{D}{2}} \, . \end{align}

At two loops there are only two relevant families of Feynman graphs: every two-loop 1PI graph is either two one-loop tadpoles joined at a vertex, which factorizes by Eq.~\eqref{eq:heppfactor} into $\HB_D(C_a)\HB_D(C_b)$, or a theta graph 
\begin{align} \Theta_{a,b,c} = \begin{tikzpicture}[baseline={(0,-.1)},scale=0.62,line cap=round,line join=round] \coordinate (L) at (-2.0,0); \coordinate (R) at ( 2.0,0); \foreach \ang/\lab in {60/a, 0/b, -60/c} { \draw (L) to[bend left=\ang,looseness=1.3] coordinate[pos=0.25] (u) node[pos=0.5,fill=white,inner sep=2pt] {$\lab$} coordinate[pos=0.75] (v) (R); \fill (u) circle (1.8pt); \fill (v) circle (1.8pt); } \filldraw[draw=black,thick] (L) circle (2.4pt); \filldraw[draw=black,thick] (R) circle (2.4pt); \end{tikzpicture} \, , \end{align}
 whose two vertices are joined by three paths with $a$, $b$ and $c$ edges chained together respectively. In the latter case $|E_{\Theta_{a,b,c}}|=a+b+c$, and deleting an edge from one of the paths leaves the one-loop graph formed by the other two up to bridge edges, which is therefore a one-loop graph computed in~\eqref{eq:HeppCycle}. Hence~\eqref{eq:hepp_recursion} gives
\begin{align} \begin{aligned} \label{eq:HeppTheta} \HB_D \left( \Theta_{a,b,c} \right) &= \frac{a \, \HB_D(C_{b+c}) + b \, \HB_D(C_{a+c}) + c \, \HB_D(C_{a+b})}{a+b+c-D} \, \\
&= \frac{(ab+bc+ca)\,(a+b+c-\tfrac{D}{2})\,(a+b+c-D) - abc\,(a+b+c-\tfrac{3D}{2})} {(a+b-\frac{D}{2})\,(b+c-\frac{D}{2})\,(a+c-\frac{D}{2})\,(a+b+c-D)} \, , \end{aligned} \end{align}
where we used the expression for $\HB_D(C_n)$ in Eq.~\eqref{eq:HeppCycle}. The simplest case is the $\Theta_{1,1,1}$,
\begin{align} \label{eq:HeppBananaThree} \HB_D \left(\Theta_{1,1,1} \right) = \HB_D \left( \begin{tikzpicture}[baseline={(0,-.1)}] \def\rud{.5} \coordinate (L) at (-\rud,0); \coordinate (R) at ( \rud,0); \draw (L) -- (R); \draw (L) to[bend left=60,looseness=1.3] (R); \draw (L) to[bend right=60,looseness=1.3] (R); \foreach \v in {L,R} \fill (\v) circle (1.6pt); \end{tikzpicture} \right) = \frac{3}{3-D}\,\frac{2}{2-\frac{D}{2}} \, . \end{align}
\subsection{Tropical QFTs' renormalization group fixed points}
\label{sec:trop_ren_group_fix}

In the tropical limit, the partition function, free energy, and effective action are fixed by the tropical loop equation~\eqref{eq:Tr_Loop_Eq} up to boundary conditions which, in turn, are fixed by the Lagrangian. Therefore, in contrast to most traditional QFTs, we do not need to rely on perturbation theory to say something about these quantities. Non-perturbative statements about QFT can be made in the tropical limit. Besides the application of tropical QFT to explicit evaluation of perturbative quantities in the original non-tropical QFT using global tropical sampling, this aspect of QFT tropicalization might be one of the most interesting. Here, we demonstrate the utility of the tropical approach for making such statements.

Conveniently, many of the equations governing QFTs in the tropical limit have been studied extensively in the context of the local potential approximation and the (functional) renormalization group approach (see, e.g., \cite{Felder1987,Morris:1994jc,Litim:2001up,Codello2012}). So, this section also gives a brief glimpse of the possibilities for importing results from these earlier investigations into the tropical QFT framework.

It is interesting to look at QFTs that fulfill the \emph{scale-invariant} condition $\lambda_k \partial_{\lambda_k} \Gamma^\tr = \lambda_k \partial_{\lambda_k} W^\tr = 0$ for all $k$.
This implies that we are at a renormalization group fixed point in theory space. Such fixed points 
give rise to \emph{universality classes} of physical systems and 
play a pivotal role across many parts of theoretical physics. 

It is easy to see that, in the tropical limit, QFTs obeying the scale-invariance condition have free energies that are solutions of the following reduction of~\eqref{eq:Tr_Loop_Eq_W},
\begin{gather} \begin{gathered} \label{eq:Tr_Loop_Eq_W_scalefree} -D W^\tr- \left(\frac{\partial W^\tr}{\partial j}\right)^2 + \frac{D-2}{2} j \frac{\partial W^\tr}{\partial j} - \frac{\partial^2 W^\tr}{\partial j^2} =0 \, . \end{gathered} \end{gather}
Felder~\cite{Felder1987} performed a rigorous study of this equation in the context of the local potential approximation.
By Felder's Theorem 3.2, Eq.~\eqref{eq:Tr_Loop_Eq_W_scalefree} has an even global solution $W^\tr_{2n}(j)$ for each $n\in\{2,3,\ldots\}$ where $2<D< 2+\frac{2}{n-1}$. The solution grows to $\infty$ for $j \rightarrow \pm \infty$ and has $2n-1$ critical points. Based on this fact, Felder associates the solution $W^\tr_{2n}(j)$ with a $\phi^{2n}$-type fixed point. Most prominently, there is the Wilson--Fisher fixed point for $D=3$ and $n=2$.

Similarly, the effective potential solves the following specialization of Equation~\eqref{eq:Tr_Loop_Eq} in the scale-invariant case,
\begin{align} \label{eq:troprgfixedpoint} -D\Gamma^\tr-\Bigl(1-\tfrac{D}{2}\Bigr)\varphi\,\partial_\varphi \Gamma^\tr = \bigl(1-\partial_\varphi^2\Gamma^{\tr}\bigr)^{-1}-1. \end{align}
This ODE was also studied extensively in the context of the local potential approximation.

Codello~\cite{Codello2012}  scanned the space of solutions  of \eqref{eq:troprgfixedpoint},
which yield an effective potential $\Gamma^\tr$ that stays finite for all values of the field strength $\varphi$. Codello's results are consistent with Felder's theorem
and also single out special finite solutions such as the Wilson--Fisher fixed point at $D=3$.

The two-dimensional case is  particularly nice.
For $D=2$, Eq.~\eqref{eq:troprgfixedpoint} becomes an autonomous ODE which may be written as
\begin{align} \label{eq:newton} \partial_\varphi^2\Gamma^{\tr} = - \frac{\partial}{\partial \Gamma^\tr} \left( -\Gamma^\tr- \frac{1}{2} \log\left| 1-2 \Gamma^\tr \right| \right)\, . \end{align}
The potential $V(x) := -x-\frac{1}{2} \log|1-2 x|$ is bounded from below and convex on the interval $(-\infty, \frac{1}{2} )$ (see Figure~\ref{fig:potential}). Moreover, it approaches $+\infty$ on both endpoints. Therefore, in this interval, we may interpret Eq.~\eqref{eq:newton} as Newton's equation for a particle oscillating in the potential $V(x)$. 
\begin{figure}[t]
\centering
\includegraphics[width=0.94\linewidth]{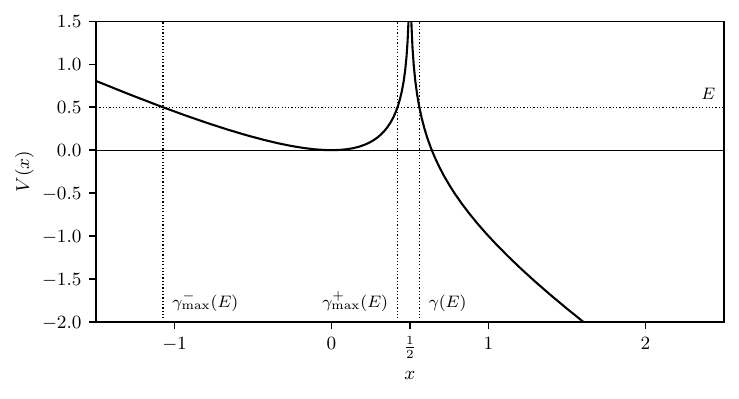}
\caption{The potential $V(x) = -x - \frac12 \log|1-2x|$.}
\label{fig:potential}
\end{figure}
We get a first, one-parameter family of real, oscillating solutions for 
$\Gamma^\tr$ that are indexed by the energy-like quantity
\begin{align} \label{eq:energyGamma} E = \frac{1}{2} \left( \partial_\varphi\Gamma^{\tr} \right)^2 -\Gamma^\tr- \frac{1}{2} \log\left| 1-2 \Gamma^\tr \right| \geq 0 \,. \end{align}

As the potential is convex in the interval $(-\infty, \frac{1}{2} )$ and grows to $+\infty$ on both boundaries, the locations $\gamma^\pm_\mathrm{max}$ of the oscillation's turning points 
are the two real solutions of $E = - \gamma^{\pm}_\mathrm{max} -\frac12 \log\left| 1-2 \gamma^{\pm}_\mathrm{max} \right|$. 
These locations are functions of $E$ and they are depicted as dotted lines in Figure~\ref{fig:potential}.
They can be expressed explicitly as $\gamma^\pm_\mathrm{max}(E) = \frac12 + \frac12 W_\pm( - e^{-2 E -1} )$, where $W_+$ is the principal branch 
and $W_-$ the other real branch of the Lambert $W$ function (see, e.g., \cite{corless1996lambert}). 

Later in this article, the \emph{period} of these oscillatory solutions will be of interest.
A standard chain of arguments using~\eqref{eq:energyGamma} gives the following integral formula for the period
\begin{align} T(E) = \sqrt{2} \int_{\gamma_\mathrm{max}^-(E)}^{\gamma_\mathrm{max}^+(E)} \frac{\dd \Gamma}{\sqrt{ E + \Gamma + \frac12 \log|1-2\Gamma| }} \, . \end{align}
The period $T(E)$ is a monotonically increasing function in $E$. This can be proven using Schaaf's theorem~\cite{MR814016}:
for $V(x) = -x - \frac12 \log(1-2x)$ the inequalities $V''(x) >0 $ and $5V'''(x)^2 - 3 V''(x) V''''(x) = 32/ (1-2x)^6 > 0$ 
on the interval $x \in (-\infty, \frac12)$ imply that $T'(E) > 0$ for all $E >0$.

Note further that, for small $E$, \eqref{eq:newton} is solved by
\begin{align} \label{eq:phase_transition_point} \Gamma^\tr = \sqrt{E} \cos(\sqrt{2} \varphi + \varphi_0) + \mathcal O(E)\,. \end{align}
Imposing the $\mathbb Z/2 \mathbb Z$ symmetry $\Gamma^\tr(\varphi) = \Gamma^\tr(-\varphi)$ further fixes the shift parameter $\varphi_0=0$.
Moreover, we learn that the limiting minimal period is $T(0) = 2\pi / \sqrt{2} = \sqrt{2} \pi$.

The (effective) action we obtain is a specific instance of the sine-Gordon action. 
The finding that $D=2$ tropical QFT in the illustrated small-$E$ limit 
singles out the sine-Gordon potential motivates our deeper dive into
this model in the second part of this article starting with the next section.

In the interval $(\frac12,\infty)$, the potential $V(x)$ is monotonically decreasing. In this 
interval, the real solutions to~\eqref{eq:newton} are also easily classified. They are indexed by 
a (not necessarily positive) energy $E$. The solution $\Gamma^\tr(\varphi)$ decreases monotonically from $\infty$ to a turning point/minimum at $\gamma(E)$,
then increases monotonically to $\infty$. 
The turning point is $\gamma(E) = \frac12 + \frac12 W( e^{-2 E -1} )$ (see Figure~\ref{fig:potential}), where we use the principal branch of the Lambert $W$ function.
As before, we have a shift symmetry, so we can arbitrarily fix the time of the turning point to $\varphi=0$. 
Hence, we get a second family of $\Z/2\Z$-symmetric, but non-periodic potentials indexed by values $E \in \R$.

Therefore, we find two continuous families of scale-invariant tropical QFTs in $D=2$.
From a purely physical perspective, the periodic family has a fundamental difference to the second, non-periodic solution:
by construction we have $\Gamma^\tr < \frac12$ for that family. It follows that $1- \partial_\varphi^2 \Gamma^\tr = \frac{1}{1-2\Gamma^\tr} > 0$,
so the full propagator $1/(1- \partial_\varphi^2 \Gamma^\tr)$ is positive. Positivity of the propagator is a necessary condition for an associated
measure on the space of field configurations to exist~\cite{Glimm:1987ylb}. 
The other family of solutions that fulfill $\Gamma^\tr > \frac12$ leads to a negative propagator.
It is therefore not possible to interpret these solutions as emerging from Euclidean scalar QFTs.

Later in this article, we will show that the first family of periodic scale-invariant tropical QFTs 
can be approached by running the coupling of a bare, tropical sine-Gordon theory into its IR fixed-point.
We do not have a similar interpretation for the second, continuous family of scale-invariant solutions.
\section{\texorpdfstring{$D=2$}{D=2} tropical QFT and tropical sine-Gordon QFT}
\label{sec:sinegordon}
\subsection{The tropical loop equation for sine-Gordon QFT}

The sine-Gordon model is a two-dimensional quantum field theory of a bosonic field. It is given by the action functional, 
\begin{align} \label{eq:S} S_{sG}[\varphi,j] = \int_{\R^{D}} \left( \frac12 \varphi \,(\Box + m_0^2) \varphi + \frac{\alpha}{g^2} (1-\cos(g \varphi)) - j \varphi \right) \dd^D x\,, \end{align}
where we follow Coleman's convention in his seminal article \cite{Coleman1975}.\footnote{We rename $\beta \rightarrow g$ to avoid the confusion with the $\beta$-functions.} Just as in Coleman's analysis, we add an auxiliary mass parameter $m_0^2$ acting as an IR regulator, and we eventually take $m_0^2 \rightarrow 0^+$.
Coleman showed that if $g^2$ exceeds $8\pi$, the energy density of the theory is unbounded from below.
If $g^2=4\pi$, the theory is equivalent to the zero-charge sector of free massive Fermi field theory. For other values of $g$, the theory is equivalent to the zero-charge sector of the massive Thirring model. In the following, from the analysis of the $\beta$-functions, we identify the tropical counterpart of the \emph{Coleman point} $g^2 = 8 \pi$ to be $g^2 = 2$. We leave the study of the relation between various special values of $g$ in both theories, such as the point $g^2=4 \pi$, to future work. 

The tropicalization procedure introduced in \cite{Borinsky:2025stz} can be directly applied to Eq.~\eqref{eq:S}. First, we define the $\xi$-deformed sine-Gordon action, 
\begin{align} \label{eq:Sxi} S^\xi_{sG}[\varphi,j] = \int_{\R^{D \cdot \xi}} \left( \frac12 \varphi \,(\Box + m_0^2)^\xi \varphi + \frac{\alpha}{g^2} (1-\cos(g \varphi)) - j \varphi \right) \dd^{D\cdot \xi} x\,, \end{align}
where the dimension is rescaled as $D \mapsto D \cdot \xi$ and the propagator is powered 
$\Box + m_0^2 \mapsto (\Box + m_0^2)^\xi$ using a deformation parameter $\xi > 0$. 
Obviously, we get the original sine-Gordon action when $\xi = 1$.
The tropical limit is attained when $\xi \rightarrow 0^+$.
The parameter $\alpha$ is the bare mass squared of the non-tropicalized sine-Gordon model.
The reparameterization $\varphi \rightarrow {{\varphi}}/{g}$ and $g \rightarrow \sqrt{\hbar}$
recovers the typical loop-counting parameter $\hbar$, as the tree-level action acquires the overall factor $\frac{1}{\hbar}$.

Note that the potential in~\eqref{eq:Sxi} is equivalent to the substitutions of the coupling constants
$\lambda_{2k} = -\alpha (-g^2)^{k-1}$ and $\lambda_{2k+1} = 0$ in Eq.~\eqref{eq:PD}.
So, the operator $\mathcal{P}_D$ in Eq.~\eqref{eq:PD} specializes to,
\begin{equation}
\label{eq:PDsG}
    \mathcal{P}_D = -D-\Bigl(1-\tfrac{D}{2}\Bigr)\varphi\,\partial_\varphi
  +\Big(1-\tfrac{D}{2} \Bigr) g \partial_{g} + 2 \, \alpha \partial_{\alpha}  \, .
\end{equation}
In fact, this form of $\mathcal P_D$ holds for \textit{any} potential $V[\varphi] = \frac{\alpha}{g^2} F[g \varphi]$ where $F$ is some function.

Further, Eq.~\eqref{eq:Tr_Loop_Eq} takes the form 
\begin{align} \label{eq:Tr_Loop_Eq_sG} \left( -D-\Bigl(1-\tfrac{D}{2}\Bigr)\varphi\,\partial_\varphi +\Big(1-\tfrac{D}{2} \Bigr) g \partial_{g} + 2 \, \alpha \partial_{\alpha} \right) \Gamma^{\mathrm{tr}}= \bigl(1-\partial_\varphi^2\Gamma^{\mathrm{tr}}\bigr)^{-1}-1 \, , \end{align}
and the boundary condition~\eqref{eq:boundary_condition_general} becomes, 
\begin{equation}
\label{eq:Boundary_Cond}
    \lim_{\hbar \rightarrow 0^+} \hbar \, \Gamma^{\text{tr}}(D,\hbar^{-\frac{1}{2}} \varphi, \alpha, \hbar^{\frac{1}{2}} g) =  -\frac{\alpha}{g^2} \left( 1-\cos{g \varphi}\right)  \, .
\end{equation}
It is a mechanical exercise to solve \eqref{eq:Tr_Loop_Eq_sG} and \eqref{eq:Boundary_Cond} perturbatively for the first few orders in $\alpha,g$ and $\varphi$ via 
an Ansatz (see \S3.3 of \cite{Borinsky:2025stz} for a prescription using a fixed-point method).
The power series solution of the PDE, which respects the boundary condition, starts as
\begin{align} \begin{aligned} \label{eq:expansion} \Gamma^{\mathrm{tr}}(D,\varphi,\alpha,g) = \,& -\frac{\alpha \varphi^{2}}{2} + \frac{\alpha g^{2} \varphi^{4}}{4!} -\frac{\alpha g^{4} \varphi^{6}}{6!} + \frac{\alpha g^{6} \varphi^{8}}{8!} +\ldots \\
& -\frac{\alpha g^{2} \varphi^{2}}{2(D-2)} + \frac{\alpha g^{4} \varphi^{4}}{4!(D-2)} - \frac{\alpha g^{6} \varphi^{6}}{6!(D-2)} +\ldots \\
& +\frac{\alpha^2 g^2 \varphi^2}{D-4} - \frac{\alpha^2 g^4 \varphi^4}{3(D-4)} \\
&- \frac{\alpha g^{4} \varphi^{2}}{4(D-2)^{2}}+ \frac{\alpha g^{6} \varphi^{4}}{48(D-2)^{2}}+\ldots \\
& + \frac{(3D-8)\,\alpha^{2}g^4\varphi^{2}}{(D-4)(D-3)(D-2)} - \frac{\alpha g^{6}\varphi^2}{12(D-2)^{3}} +\ldots\\
 & + \frac{\alpha}{D-2} + \frac{\alpha g^2}{2(D-2)^2} + \frac{\alpha^2}{4-D} + \ldots \\
& + \ldots \, . \end{aligned} \end{align}

\subsection{Perturbation theory of tropical sine-Gordon theory}
\label{sec:pertsinegordon}

The terms in the expansion~\eqref{eq:expansion} can also be computed via the traditional route: write down all Feynman diagrams,  evaluate them using the tropical Feynman rules from Section~\ref{sec:perturbation}, and sum them all up. 
The first line of~\eqref{eq:expansion} features only tree-level, vertex-type diagrams, whose shape and value can be read off from the potential in \eqref{eq:Sxi} using 
\begin{align} \frac{\alpha}{g^2} (1-\cos(g \varphi)) = -\alpha \sum_{k \geq 1} (-1)^k \frac{g^{2(k-1)} \varphi^{2k}}{(2k)!}\,\,, \end{align}
and therefore 
\begin{equation}
\label{eq:vertices}
        \vertextwo = -\alpha \, , \quad \vertexfour =  \alpha g^2 \, , \quad \ldots
\end{equation}
The second line of~\eqref{eq:expansion} contains the contributions from one-loop tadpole diagrams.
These diagrams are divergent for $D \rightarrow 2$.  The one-loop tadpoles are just the $1$-cycle diagrams $C_1 = \vertexloop$. 
Eq.~\eqref{eq:HeppCycle} gives their tropical Feynman integral. The 
external legs do not influence the (tropical-)analytic part of the Feynman integral. Only the coupling factor changes 
depending on the vertex structure. So,
\begin{align} \vertextwoloop = \frac12 \alpha g^2 \frac{1}{1-\frac{D}{2}}\,, \quad \vertextwolooptwolegs = -\frac12 \alpha g^4 \frac{1}{1-\frac{D}{2}} \, ,\quad \ldots \end{align}
The contributions of the bubble (double-edge) diagrams to the third line of~\eqref{eq:expansion} are
\begin{align} \bubbletwolegs &= -\frac12 \alpha^2 g^2 \frac{2}{2-\frac{D}{2}}\,, \\
\bubblefourlegs + 3\,\bubbletwotwolegs &= \left(\frac12+\frac32\right)\alpha^2 g^4 \frac{2}{2-\frac{D}{2}} = 2\alpha^2 g^4 \frac{2}{2-\frac{D}{2}}\,. \end{align}
The fourth line of~\eqref{eq:expansion} collects the two-loop diagrams with two tadpoles attached to a single vertex, whose analytic part is the square of the tadpole one,
\begin{align} \vertextwolooptwo = -\frac18 \alpha g^4 \frac{1}{\left(1-\frac{D}{2}\right)^{2}}\,, \quad \vertextwolooptwotwolegs = \frac18 \alpha g^6 \frac{1}{\left(1-\frac{D}{2}\right)^{2}} \, ,\quad \ldots \end{align}
The first term of the fifth line of~\eqref{eq:expansion} combines several two-loop diagrams, given by
\begin{align} \sunrisetwolegs &= \frac12 \alpha^{2} g^{4} \frac{1}{3-D}\,\frac{2}{2-\frac{D}{2}}\,, \nonumber\\
\bubbletadpolelegsright = \bubbletadpolelegsleft &= \frac14 \alpha^{2} g^{4} \frac{1}{1-\frac{D}{2}}\,\frac{2}{2-\frac{D}{2}}\,. \end{align}
The second term is a three-loop tadpole,
\begin{align} \vertexthreelooptwo = \frac{1}{48} \alpha g^{6} \frac{1}{\left(1-\frac{D}{2}\right)^{3}} \, ,\quad \ldots \end{align}
Further, the last line of~\eqref{eq:expansion} comes from vacuum diagrams. 
When comparing these computations with the expansion~\eqref{eq:expansion}, note
that taking the $\partial_\varphi^n$ derivative to extract the contribution with $n$ legs
produces an additional factor of $n!$.
We invite the reader to confirm some of the remaining terms in the expansion using the tropical Feynman rules on their own.

The contribution of the tadpole diagrams is divergent when $D\rightarrow 2$. This 
is an expected feature of the tropical QFT: it has the same divergences as the original QFT. Notice also that the differential operator~\eqref{eq:PD} heavily degenerates when $D =2$. The explicit singularities in the perturbative expansion are manifestations of this degeneration.

\subsection{Diagrammatic renormalization of tropical sine-Gordon QFT}
\label{sec:renormalization}
In this section, we will make heavy use of dimensional regularization using the spacetime parameter $D$.
We use the notation $\varepsilon = 2-D$, typical for theories that are critical in $D=2$.

Because sine-Gordon QFT is renormalizable, all (UV) divergences can be absorbed by a change of the coupling parameters. 
For the sine-Gordon model specifically, only the coupling $\alpha$ needs to be renormalized~\cite{Coleman1975} to render the $n$-point functions finite. 
Coleman~\cite{Coleman1975} gave the diagrammatic explanation for this: all divergences come from (potentially) multiple tadpole insertions at some vertex. The series of such isolated tadpole insertions starts as 
\begin{equation}
\label{eq:vertexren-definition}
\vertextwo +       \vertextwoloop + \vertextwolooptwo +\ldots = \vertextworen \, ,
\end{equation}
where we denote a vertex that is decorated with all possible tadpole insertions as a hollow vertex. Note
that these tadpole insertions can appear at any vertex, not only at the 2-point vertex.
Dressing vertices with tadpole graphs is a classic staple of perturbative treatments 
of the sine-Gordon model, see \cite[Figure~4]{AmitGoldschmidtGrinstein1980}.

For instance, the contribution of the vacuum tadpole correction is given by the product of the vertex factor, symmetry factor and the Hepp bound,
\begin{equation}
\label{eq:tadpoleexample}
    \vertexloop
    =
    -
    \alpha \, \frac{1}{2}
\cdot
    \frac{1}{1-D/2}
    =    
-\frac{\alpha}{\varepsilon} \, ,
\end{equation}
where $\varepsilon = 2-D$. Due to the factorization rule for tropical Feynman integrals (see \S\ref{sec:perturbation}), we have 
\begin{equation}
\vertextwolooptwo = - \alpha g^4 \frac{1}{2! \cdot 2^2}
\cdot
    \frac{1}{(1-D/2)^2} 
= 
-
\frac{1}{2!}
\frac{\alpha g^4}{\varepsilon^2} \, ,
\end{equation}
where the analytic piece of the Feynman integral is the square of the analytic piece of~\eqref{eq:tadpoleexample}.
In general, the symmetry factor of $n$ tadpole insertions is ${1}/{(n! 2^n)}$,
the tropical Feynman integral is $\mathcal H_D(C_1)^n = \left(\frac{2}{\varepsilon} \right)^n$,
and if the tadpole has $2k$ legs,
then the coupling factor is $-\alpha (-g^2)^{k+n-1}$. Hence, we can resum the series for
the dressed two-point vertex,
\begin{equation}
\label{eq:vertexren}
 \vertextworen = 
\vertextwo +    \vertextwoloop + \vertextwolooptwo +\ldots  = 
-\alpha +
\frac{\alpha g^2}{\varepsilon} 
- 
\frac{1}{2!}
\frac{\alpha g^4}{\varepsilon^2} + \ldots  =
-
\alpha \cdot
e^{-\frac{g^2}{\varepsilon}} \, .
\end{equation}
The same daisy summation applied to a $2n$-vertex gives the renormalized $2n$-vertex with the same multiplicative
factor $e^{-g^2/\varepsilon}$, so the whole cosine renormalizes
multiplicatively.\footnote{This is the tropical realization of the normal-ordering identity~$\cos(g\varphi) = e^{-(g^2/2)G_0(0)}\!:\!\cos(g\varphi)\!:\!$,
where $G_0(x-y)=\langle\varphi(x)\varphi(y)\rangle_0$ is the regulated free-field correlator used to define normal ordering. Its tropical coincident-point contraction
is the one-loop tadpole integral, $G_{0,\mathrm{tr}}(0)=\mathcal H_{2-\varepsilon}(C_1)=2/\varepsilon$.
Exponentiating this contraction sums any number of independent tadpoles attached to the same vertex.}

This motivates us to define
\begin{align} \label{eq:renormalpha} \alpha_{\ren} = \alpha e^{-\frac{g^2}{\varepsilon}} \, , \end{align}
such that $ \vertextworen$ gets the \emph{renormalized} Feynman rules 
\begin{align} \vertextworen = -\alpha_\ren, \quad \vertexfourren = \alpha_\ren g^2, \quad\ldots \, . \end{align}
This definition is analogous to Coleman's renormalization of the sine-Gordon model. 
As we will see below,  the renormalization~\eqref{eq:renormalpha} is sufficient to render the tropical theory finite, just as for the original, undeformed theory.

\subsection{The renormalized tropical loop equation}
\label{subsec:renotroploop}
We can substitute the renormalized mass $\alpha_\ren$ into the effective action $\Gamma^\tr$. 
Expressing $\alpha$ in terms of $\alpha_\ren$ via $\alpha = \alpha_\ren e^{g^2/\varepsilon}$
gives the PDE
\begin{align} \label{eq:TrLoopEqRen} \mathcal P_{D}^\ren \Gamma^{\mathrm{tr}} = \bigl(1-\partial_\varphi^2\Gamma^\tr\bigr)^{-1}-1 \, , \end{align}
where $\mathcal P_{D}^\ren$ is the operator $\mathcal P_{D}$ from~\eqref{eq:PDsG} expressed in terms of the renormalized mass $\alpha_\ren$:
\begin{align} \label{eq:PDeps} \mathcal P_{D}^\ren &= -2+\frac12 \varepsilon (2 - \varphi\,\partial_\varphi + g \partial_{g} ) + (2- g^2) \, \alpha_\ren \partial_{\alpha_\ren} \,. \end{align}
The precise form of this operator follows from the chain rule as
$\mathcal P_D = -D - (1 - D/2)\varphi\partial_\varphi + (1-D/2)g\partial_g + 2\alpha\partial_\alpha$, which gives
$\alpha\partial_\alpha = \alpha_\ren\partial_{\alpha_\ren}$ and
$g\partial_g\big|_{\alpha} = g\partial_g\big|_{\alpha_\ren} - \tfrac{2g^2}{\varepsilon}\alpha_\ren\partial_{\alpha_\ren}$.
Substituting and using $D = 2 - \varepsilon$ gives \eqref{eq:PDeps}.

The expansion~\eqref{eq:expansion} rewritten in terms of $\alpha_\ren$ gives a perturbative solution to this new PDE, 
\begin{align} \label{eq:expansion_ren} \Gamma^{\mathrm{tr}} &= \Lambda + \frac{\alpha_\ren}{g^2} - \frac{\alpha_{\ren}}{g^2} \left(\frac{g^2 \varphi^2}{2} - \frac{g^4 \varphi^4}{4!} + \ldots \right) \nonumber \\
&+ \alpha^2_\ren \left(\frac{1}{2} - \frac{g^2 \varphi^2}{2} + \frac{g^4 \varphi^4}{6} + \ldots\right) - \alpha^3_\ren \left(\frac{1}{4} - \frac{3g^2 \varphi^2}{8} + \frac{7g^4 \varphi^4}{32} + \ldots\right) + O(\varepsilon) + \ldots \nonumber \\
&+ g^2 \alpha^2_\ren \left(\frac{g^2 \varphi^2}{2} - \frac{g^4 \varphi^4}{6} + \ldots \right) - g^2 \alpha^3_\ren \left(- \frac{1}{2} + \frac{13 g^2 \varphi^2}{8} - \frac{7g^4 \varphi^4}{6} + \ldots \right) + O(\varepsilon) + \ldots \nonumber \\
& + \ldots \, , \end{align}
with a $\varphi$-independent ``vacuum'' constant, which is infinite in the $\varepsilon \rightarrow 0$ limit,
\begin{equation}
    \Lambda = -\frac{\alpha}{g^2} =-\frac{\alpha_\ren}{g^2} e^{g^2/\varepsilon}  \, .
\end{equation}
We explicitly see that  for finite $\varepsilon$, in the classical limit $\hbar \rightarrow 0$, as we take it in Eq.~\eqref{eq:Boundary_Cond}, the coupling $\alpha$ coincides with $\alpha_\ren$ and the first line of \eqref{eq:expansion} does indeed reproduce the boundary value required in Eq.~\eqref{eq:Boundary_Cond} while the rest of the expansion is trivially zero in this limit. 

Apart from the constant $\Lambda$ the renormalized action is finite for $\varepsilon \rightarrow 0$, as expected by analogy to non-tropicalized sine-Gordon theory.  To make everything finite,  we will subtract this constant $\Lambda$, which does not affect the correlators of the field $\varphi$. Subsequently, we can  finish the renormalization procedure by taking the limit $\varepsilon \rightarrow 0$ while keeping the renormalized coupling $\alpha_\ren$ fixed. This way, we obtain a finite generating functional, which we define by
\begin{equation}
\label{eq:gamma_ren}
    \Gamma^{\text{tr}}_\ren(\varphi,\alpha_\ren,g) = \lim_{\varepsilon \rightarrow 0}  \left( \Gamma^{\text{tr}}  - \Lambda \right) \, .
\end{equation}
The fully renormalized effective action $\Gamma^{\text{tr}}_\ren$  satisfies the PDE,
\begin{equation}
\label{eq:PDEren}
    \mathcal{P}^\ren  \Gamma^{\text{tr}}_\ren = \bigl(1-\partial_\varphi^2\Gamma^\tr_\ren\bigr)^{-1}-1 \, ,
\end{equation}
where we can now take the (non-singular) limit $\varepsilon \rightarrow 0$ to obtain the operator $\mathcal{P}^\ren$ defined by
\begin{equation}
\label{eq:Pren}
    \mathcal{P}^\ren = \lim_{\varepsilon \rightarrow 0} \mathcal{P}^\ren_{D}= -2 +(2- g^2) \, \alpha_\ren \partial_{\alpha_\ren} \, .
\end{equation}
The boundary condition (i.e.~the $\hbar \rightarrow 0^+$ limit) becomes
\begin{equation}
\label{eq:Boundary_Cond_ren}
    \lim_{\hbar \rightarrow0} \hbar \, \Gamma^{\text{tr}}_\ren(\hbar^{-\frac{1}{2}} \varphi, \alpha_\ren, \hbar^{\frac{1}{2}} g)= \frac{\alpha_\ren}{g^2} \cos{g\varphi}  \, ,
\end{equation}
with no mention of the bare coupling $\alpha$ and no $\varepsilon$'s. 

We briefly remark that 
Eq.~\eqref{eq:PDEren} can also be explained purely diagrammatically along the lines of the proof of~\eqref{eq:Tr_Loop_Eq} from \cite{Borinsky:2025stz}.
The derivation of \eqref{eq:Tr_Loop_Eq} from \cite{Borinsky:2025stz} uses the recursive identity~\eqref{eq:hepp_recursion} and
    combines it with combinatorially gluing a string of 1PI graphs on a necklace. 
Eq.~\eqref{eq:PDEren} can be derived similarly,
by absorbing the singular tadpole subgraphs into the renormalized vertex
and dressing each edge with arbitrary 1PI two-point tadpole insertions. The right-hand side of \eqref{eq:PDEren} retains the same geometric-series form as the right-hand side of \eqref{eq:Tr_Loop_Eq}: combinatorially, it still builds a necklace of 1PI graphs. 

\subsection{Ambiguities of the renormalized effective action}
\label{sec:ambiguities}
Note that the right-hand side of \eqref{eq:Boundary_Cond_ren} includes a constant term (once $\cos{g \varphi}$ is expanded) in contrast to the original boundary condition \eqref{eq:Boundary_Cond}.
In fact, Eq.~\eqref{eq:PDEren} does not completely fix $\Gamma^\tr_\ren$. Specific, scheme-dependent ambiguities remain. The perturbative solution~\eqref{eq:expansion_ren} that we constructed from the (unique) bare solution $\Gamma^\tr$ in $D$ dimensions corresponds to
a specific choice of a renormalization scheme: the \emph{minimal subtraction scheme}, which is determined
by the most economical removal of all $\frac{1}{\varepsilon}$ poles.

We could also have subtracted additional finite terms in the definition of $\Gamma^\tr_\ren$~\eqref{eq:gamma_ren}. Associated with this is a shift ambiguity of the effective action 
\begin{align} \widetilde \Gamma^\tr_\ren = \Gamma^{\text{tr}}_\ren - \frac{C(g^2)}{g^2} \cdot \alpha_\ren^{\frac{2}{2-g^2}} \, , \end{align}
where $C(g^2)$ is some function in $g^2$ such that $\lim_{g\rightarrow 0} C(g^2) < \infty$. The shifted $\widetilde \Gamma^\tr_\ren$ solves~\eqref{eq:PDEren} just as $\Gamma^\tr_\ren$. Only the first coefficient of the $g^2$ expansion of $C(g^2)$ modifies the boundary condition \eqref{eq:Boundary_Cond_ren}. So, if $C(0)=0$ the boundary condition is also left intact. In fact, with $C(0)=1$, 
we recover the original boundary condition
$ \lim_{\hbar \rightarrow0} \hbar \, \widetilde\Gamma^{\text{tr}}_\ren(\hbar^{-\frac{1}{2}} \varphi, \alpha_\ren, \hbar^{\frac{1}{2}} g)= -\frac{\alpha_\ren}{g^2} (1-\cos{g\varphi})$.

A second ambiguity comes from the freedom to multiplicatively change 
$\alpha_\ren$. Both~\eqref{eq:PDEren} and~\eqref{eq:Boundary_Cond_ren} are invariant
under the change $\alpha_\ren \rightarrow A(g^2) \cdot \alpha_\ren$, whenever $A(g^2) = 1 + \mathcal O(g^2)$.
We can directly associate this second ambiguity with the renormalization group which, here, acts multiplicatively on $\alpha_\ren$. In the next subsection, we will illuminate this from the perspective of the $\beta$ function associated with $\alpha_\ren$.

It is convenient to fix $\Gamma^\tr_\ren$ to agree with the minimal subtraction scheme solution without reference to the original bare solution by imposing the additional constraint on top of~\eqref{eq:PDEren} and~\eqref{eq:Boundary_Cond_ren} that
\begin{align} \label{eq:minimalsubstraction} \Gamma^\tr_\ren = \frac{\alpha_\ren}{g^2} \cos ( {g\varphi}) + \mathcal O(\alpha_\ren^2) \, , \end{align}
and $\Gamma^\tr_\ren \in \Q(g^2)[\cos(g \varphi)][[\alpha_\ren]]$, which 
means that $\Gamma^\tr_\ren$ is a power series in $\alpha_\ren$ with coefficients that are polynomials in $\cos(g \varphi)$ and rational functions in $g^2$.
Ansatz~\eqref{eq:ansatz}, which we use later on for solving the loop equation, clearly fulfills this condition.

\subsection{The \texorpdfstring{$\beta$}{beta}-functions of the tropical sine-Gordon QFT}
\label{sec:betafunctions}

Eq.~\eqref{eq:Pren} gives us the $\beta$-function of tropical sine-Gordon QFT in 
the minimal subtraction scheme,\footnote{Here, we get a minus sign via the typical $\beta_X = \mu \partial_\mu X$ definition of the $\beta$-function.}
\begin{align} \label{eq:betaalpha} \beta_{\alpha_\ren} = - (2-g^2) \alpha_\ren \, . \end{align}
Moreover, we find that the $\beta$-function associated with the coupling $g$ vanishes identically, $\beta_{g^2} = 0$. So, in the tropical QFT, the $g^2$-coupling does not run at all. We emphasize that our $\beta$ function computations are \emph{exact} in the minimal subtraction scheme and not subject to any truncation. Interestingly, the tropical $\beta$-functions agree with those of the original, non-tropicalized sine-Gordon QFT up to first order in the coupling parameter $\alpha$, up to trivial rescalings.
From~\cite{AmitGoldschmidtGrinstein1980}, 
\begin{align} \beta_{\alpha_\ren}^{sG} &= - \left(2- \frac{g^2}{4 \pi} \right) \alpha_\ren + \mathcal O(\alpha_\ren^2) \, ,\\
\beta_{g^2}^{sG} &= 0 + \mathcal O(\alpha_\ren^2) \, , \end{align}
we see that the tropical model perfectly reproduces the first coefficients of the $\beta$ functions of the original model up to the redefinition $g^2 \rightarrow \frac{g^2}{4\pi}$. 
These are precisely the renormalization-scheme-independent coefficients of the sine-Gordon $\beta$ function.
The non-tropical $\beta$-function changes sign at the phase-transition point $g^2 = 8 \pi$, where the theory ceases to have a ground state. The analogue of this phase-transition point in the tropical model is therefore the value $g^2=2$, where the $\cos$ in the sine-Gordon potential becomes irrelevant and the theory becomes Gaussian. 

\subsection{Resumming the perturbative expansion}
\label{sec:resumming}
The renormalized expansion~\eqref{eq:expansion_ren} features 
three expansion parameters: $\alpha_\ren$, $g^2$ and $\varphi$. In this section, we briefly illustrate various possible resummations of the expansion. 
It is instructive to observe how these resummations are compatible with both the PDE viewpoint on tropical QFT 
and the traditional diagrammatics.

The diagrammatic explanation for the infinite number of terms at each order in $g^2$ is the presence of renormalized two-valent vertices $\vertextworen$. We start by resumming these terms systematically, guided by the diagrammatics.

The tadpole $C_1$ has already been absorbed into the renormalized vertex $\alpha_\ren$. On top of that tadpole, we can also resum all other tadpole one-loop contributions to the two-point function. Starting with the renormalized 2-point vertex and then the cycle $C_2$ with one renormalized 2-point vertex inserted, we have the diagrammatic expansion up to one loop,
\begin{equation}
\begin{gathered}
\label{eq:twopointGammaexp}
\left . \frac{\partial^2 \Gamma^\tr_\ren}{\partial \varphi^2} \right|_{\varphi = 0}
=
\quad
 \vertextworen
\quad
+
\quad
\begin{tikzpicture}[baseline=-0.5ex,scale=0.85,line cap=round,line join=round] \coordinate (L) at (-0.45,0); \coordinate (R) at (0.45,0); \draw[thick] (L) to[bend left=35] (R); \draw[thick] (L) to[bend right=35] (R); \draw (L)--++(160:0.45); \draw (L)--++(200:0.45); \foreach \v in {L,R} \filldraw[fill=white,draw=black,thick] (\v) circle (0.07); \end{tikzpicture}
\quad+\quad
\begin{tikzpicture}[baseline=-0.5ex,scale=0.75,line cap=round,line join=round] \coordinate (v1) at (-0.75,0); \coordinate (v2) at (0.45,0.55); \coordinate (v3) at (0.45,-0.55); \draw[thick] (v1)--(v2)--(v3)--cycle; \draw (v1)--++(160:0.55); \draw (v1)--++(200:0.55); \foreach \v in {v1,v2,v3} \filldraw[fill=white,draw=black,thick] (\v) circle (0.07); \end{tikzpicture}
\quad+\quad\cdots \, .
\end{gathered}
\end{equation}
The tropical Feynman integral of $C_n$ is computed in Eq.~\eqref{eq:HeppCycle}. 
The symmetry factor of each term in the expansion is $\frac12$ due to the horizontal flip symmetry.
Thus we get the expansion
\begin{align} \label{eq:Gamma2_oneloop_cycle_corrected} \left. \partial_\varphi^2\Gamma^\tr_\ren \right|_{\varphi=0}&= -\alpha_\ren + \frac12 \alpha_\ren g^2 \sum_{r \geq 2} \mathcal H_2( C_r ) (-\alpha_\ren)^{r-1} +\mathcal O(g^4)\\
&= -\alpha_\ren + \frac12 \alpha_\ren g^2 \sum_{r \geq 2} \frac{r}{r-1} (-\alpha_\ren)^{r-1} +\mathcal O(g^4)\\
&= -\alpha_\ren + \frac12 \alpha_\ren g^2 \sum_{r \geq 1} \frac{r+1}{r} (-\alpha_\ren)^{r} +\mathcal O(g^4)\\
&= -\alpha_\ren - \frac12 \alpha_\ren g^2 \left( \log( 1 + \alpha_\ren) + \frac{\alpha_\ren}{1+\alpha_\ren} \right) +\mathcal O(g^4) \, . \end{align}
Here, we can see explicitly how $\log$-term singularities in $\alpha_\ren$ emerge from the resummation. 

A different resummation combines the $\varphi$ terms into factors of $\cos(g\varphi)$.
The necessary observation is that we may decorate the renormalized vertices in~\eqref{eq:twopointGammaexp} with arbitrary legs. We mark each occurrence of a leg-decorated vertex with a $\partial_\varphi^2 (1-\frac{\alpha_\ren}{g^2} \cos (g \varphi)) = \alpha_\ren \cos (g \varphi) $
term. The second-order derivative fixes two legs of the renormalized vertex that are tied 
together to form the necklace. The effective action is, hence, naturally organized as an expansion in
$h = \alpha_\ren \cos(g\varphi)$ completely absorbing the $\varphi$ dependence. 
With this parametrization, Eq.~\eqref{eq:PDEren} transforms into 
\begin{align} \label{eq:PDEren_h} (-2 +(2- g^2) \, ( \alpha_\ren \partial_{\alpha_\ren} + h \partial_h )) \Gamma^{\text{tr}}_\ren = \bigl(1+g^2 ( h \partial_h - ( \alpha_\ren^2 - h^2) \partial_h^2 )\Gamma^\tr_\ren\bigr)^{-1}-1 \, , \end{align}
and the boundary condition becomes
\begin{align} \lim_{\hbar \rightarrow0} \hbar \, \Gamma^{\text{tr}}_\ren(h, \alpha_\ren, \hbar^{\frac{1}{2}} g)= \frac{h}{g^2} \, . \end{align}
Solving \eqref{eq:PDEren_h} via an iteration method as an expansion in $g^2$ then gives through two loops,
\begin{align} \label{eq:GammaLoopExpansion} \Gamma^{\text{tr}}_\ren = \frac{h}{g^2} + \frac{h}{2} \log{(1+h)} \, + \, & g^2 \Biggl(\frac{h}{8} \log^2{\left(1+h \right)} + \frac{h^2}{4(1+h)} \log{\left(1+h \right)} \nonumber \\
&- \frac{2 h^2\left(6 + 6 h + h^2\right) - \alpha^2_\ren \left(12 + 15 h + 5 h^2 \right)}{24(1+h)^3} \Biggr) + \ldots \, . \end{align}
The leading logarithms (LL) can be resummed to the interesting formula
\begin{equation}
    \left(\Gamma^{\text{tr}}_\ren \right)^{\text{LL}} = \frac{h}{g^2} (1+h)^{g^2/2} \, .
\end{equation}

Alternatively, the renormalized tropical loop equation may also be expanded order-by-order in $\alpha_\ren$, keeping all orders in $g^2$ as $g$ only enters as a free parameter. E.g., up to order $\alpha_\ren^2$,
\begin{equation}
\label{eq:alpha2_ren}
    \Gamma^{\text{tr}}_\ren
    =  \frac{\alpha_\ren}{g^2} \cos{g \varphi}
    + \frac{\alpha_\ren^2}{2(1+g^2)} \left( \frac{g^2}{1-g^2} + \cos^2{g \varphi} \right)
    +\ldots \, .
\end{equation}
\subsection{Solving the renormalized tropical sine-Gordon PDE}
\label{sec:solvingpde}

Eq.~\eqref{eq:alpha2_ren} suggests making the following Ansatz for $ \Gamma^{\text{tr}}_\ren$
\begin{align} \label{eq:ansatz} \Gamma^{\text{tr}}_\ren = \sum_{k = 1}^\infty \sum_{j = 0}^k c_{k,j}(g) \alpha_\ren^k \cos(j g \varphi)\, . \end{align}
This Ansatz was successfully used for functional renormalization group analyses \cite{NagyPolonyiSailer2006,Nagy2007,Nagy:2009pj}.
The key property of this Ansatz is that $\alpha_\ren^k \cos(j g \varphi)$ is an eigenvector of both the operator $\mathcal P^\ren = -2 +(2- g^2) \, \alpha_\ren \partial_{\alpha_\ren}$ and $\partial_\varphi^2$:
\begin{align} \mathcal P^\ren \alpha_\ren^k \cos(j g \varphi) &= ( (2- g^2)k - 2 ) \alpha_\ren^k \cos(j g \varphi) \\
\partial_\varphi^2 \alpha_\ren^k \cos(j g \varphi) &= - j^2 g^2 \alpha_\ren^k \cos(j g \varphi) \,. \end{align}
The minimal subtraction renormalization condition~\eqref{eq:minimalsubstraction} implies that 
$c_{1,1} = \frac{1}{g^2}$ and $c_{1,0} = 0$.
Multiplying Eq.~\eqref{eq:PDEren} by $1-\partial_\varphi^2\Gamma^\tr_\ren$, expanding the Ansatz and using the identity $\cos(n x) \cos(m x) = \frac12 \left( \cos((n-m)x) + \cos( (n+m) x) \right)$, we get
\begin{align} \label{eq:recursion} B_{k,j} \, c_{k,j} = -\frac{g^2}{2} \sum_{\substack{k_1, k_2 \geq 1\\ k_1+k_2 = k }} \left( (2-g^2)\, k_1 - 2 \right) \sum_{j_1 = 0}^{k_1} \sum_{j_2 = 0}^{k_2} j_2^2 \, c_{k_1,j_1} c_{k_2,j_2} \left( \delta_{j_1+j_2,\, j} + \delta_{|j_1-j_2|,\, j} \right) \, , \end{align}
for all $j \leq k$ with $k \geq 2$ and 
where $\delta$ denotes the Kronecker delta and
    $B_{k,j} = (2-g^2)\,k - 2 + j^2 g^2\,.$
The right-hand side of \eqref{eq:recursion} only involves coefficients of the orders $k_1, k_2 < k$. Whenever $B_{k,j} \neq 0$, Eq.~\eqref{eq:recursion} hence determines the coefficients of order $k$ from the lower orders. Moreover, since $j_1 \leq k_1$ and $j_2 \leq k_2$, the right-hand side vanishes for $j > k$, which
shows that the Ansatz~\eqref{eq:ansatz} closes.
In the range $0 < g^2 < 2$, the condition $B_{k,j} \neq 0$ can only fail when 
$ g^2 = 2 - \frac{2}{k} \in \left\{1, \tfrac43, \tfrac32, \tfrac85, \ldots \right\} $. We can also allow these countably many values of $g^2$ at the cost of augmenting the 
Ansatz~\eqref{eq:ansatz} with $\log$ terms. It would be interesting to investigate 
if these \emph{resonances} in the tropical theory can be associated with 
the \emph{breather solutions} in the non-tropical sine-Gordon QFT.
For the present paper, we assume the value of $g^2$ not to fall into this 
countable set of resonant values.

\begin{figure}[p]
\centering
\begin{subfigure}[t]{0.41\linewidth}
\centering
\includegraphics[width=\linewidth]{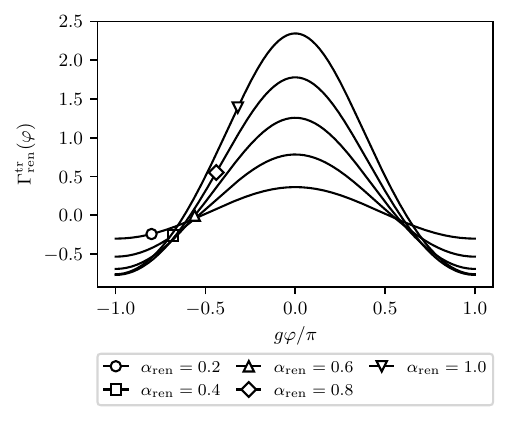}
\caption{$g^2 = 0.6$}
\label{fig:profiles_0p6}
\end{subfigure}\hfill
\begin{subfigure}[t]{0.41\linewidth}
\centering
\includegraphics[width=\linewidth]{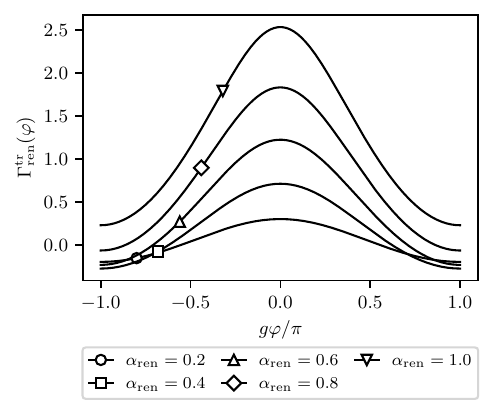}
\caption{$g^2 = 0.8$}
\label{fig:profiles_0p8}
\end{subfigure}

\begin{subfigure}[t]{0.41\linewidth}
\centering
\includegraphics[width=\linewidth]{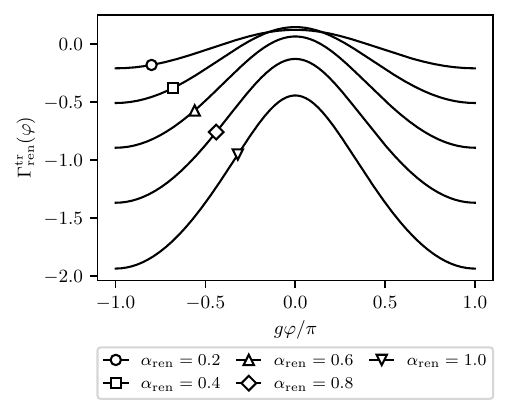}
\caption{$g^2 = 1.2$}
\label{fig:profiles_1p2}
\end{subfigure} \hfill
\begin{subfigure}[t]{0.41\linewidth}
\centering
\includegraphics[width=\linewidth]{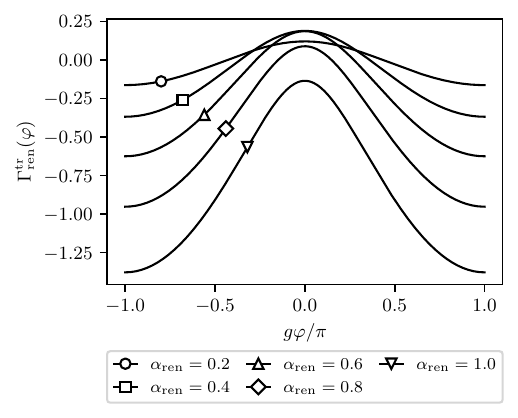}
\caption{$g^2 = 1.4$}
\label{fig:profiles_1p4}
\end{subfigure}
\caption{Plots of the effective action $\Gamma^{\tr}_\ren(\varphi)$
at couplings $g^2$ and $\alpha_\ren$ obtained using the 
convergent $\alpha_\ren$ expansion~\eqref{eq:ansatz}.
}
\label{fig:profiles}
\end{figure}

It turns out that the expansion~\eqref{eq:ansatz} 
has a \emph{non-zero} radius of convergence in $\alpha_\ren$
provided that $g^2$ is in the range $0 < g^2 < 2$.
This is interesting, as perturbative expansions of QFT quantities such as the effective 
action are typically not convergent. These expansions are usually
asymptotic in the sense that the coefficients grow factorially or even faster \cite{Dyson:1952tj,Lipatov1977}.
 Note that~\eqref{eq:ansatz} is
a systematic expansion in $\alpha_\ren$, but is exact in $g$: no power series expansion in $g^2$ is needed. In fact, expanding the individual coefficients $c_{k,j}(g)$ in $g^2$ and summing over all $k$ will result in a factorially divergent power series in $g^2$. We will analyse such a factorial divergence in the next section.

Because such a convergent perturbative expansion is an interesting find,  
we give an explicit and mathematically rigorous proof of convergence. 
The proof is surprisingly technical, and crucially relies on the coupling $g^2$ to lie in the 
``physical'' range $0 < g^2 < 2$.
\begin{theorem}
\label{thm:convergence}
Let $D(g) = \max\left(\frac{1}{2-g^2}, \frac43 \frac{1}{g^2} \right)$.
If $0 < g^2 < 2$ such that $g^2 \neq 2 - \frac{2}{k}$ for all $k \geq 1$,
 and $|\alpha_\ren| < \frac{1}{4D(g) (1+g^2D(g))}$, %
then, using the initial condition 
$c_{1,1} = \frac{1}{g^2}$, $c_{1,0} = 0$,
 we can solve~\eqref{eq:recursion} recursively for the coefficients $c_{k,j}$ and the sum in Eq.~\eqref{eq:ansatz} over $k$ and $j$ 
is absolutely convergent.
\end{theorem}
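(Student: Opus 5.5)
The plan is a majorant-series argument. I will turn the recursion~\eqref{eq:recursion} into a scalar convolution inequality for a suitably weighted norm of the $k$-th order coefficients. I will then compare it with a Catalan-type quadratic equation, whose radius of convergence should give the bound $\frac{1}{4D(g)(1+g^2D(g))}$ in the statement.

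\textbf{Step 1 (solvability and the hidden cancellation).} For $j\ge1$ and $k\ge2$ the divisor is $B_{k,j}=(2-g^2)(k-1)+(j^2-1)g^2$, which is positive for $0<g^2<2$. The only possibly vanishing or small divisor is $B_{k,0}=(2-g^2)k-2$. This is exactly the excluded resonance $g^2=2-\frac2k$, so the hypothesis makes the recursion solvable, with $c_{1,1}=1/g^2$ and $c_{1,0}=0$.

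A bound independent of the distance to the resonances needs a second observation. The prefactor $((2-g^2)k_1-2)$ on the right-hand side equals $B_{k_1,0}$. I will therefore work with the rescaled quantities $d_{k,j}:=((2-g^2)k-2)\,c_{k,j}$. For $j=0$, this gives $d_{k,0}=B_{k,0}c_{k,0}$, which equals the right-hand side without any division. The small denominators are thus never used in the recursion for $d$; they only appear at the end when recovering $c_{k,0}=d_{k,0}/B_{k,0}$. Since $B_{k,0}\to\infty$ linearly in $k$, dividing by it costs only a bounded constant and does not affect absolute convergence.

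\textbf{Step 2 (ratio bounds).} Rewrite~\eqref{eq:recursion} as
\[
B_{k,j}\,c_{k,j}=-\tfrac{g^2}{2}\sum d_{k_1,j_1}\,j_2^2\,c_{k_2,j_2}\,(\delta_{j_1+j_2,j}+\delta_{|j_1-j_2|,j}),
\]
where the sum runs over $k_1+k_2=k$ with $k_1,k_2\ge1$, $0\le j_1\le k_1$ and $0\le j_2\le k_2$. I will then prove two elementary inequalities.
\begin{enumerate}
\item[(a)] For $j\ge1$, $|(2-g^2)k-2|\le D(g)\,B_{k,j}$ up to an absolute constant. This bounds $d_{k,j}$ by $c_{k,j}$ on the nonresonant modes, and the factor $\frac{1}{2-g^2}$ in $D(g)$ comes from the $j=1$ mode.
\item[(b)] The weight $j_2^2g^2$ is controlled by $B_{k,j}$ together with the weight of the other factor. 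I plan to use $j_2\le j+j_1$ and $B_{k,j}\gtrsim j^2g^2$ for $j\ge2$, where the constant $\frac43$ in $D(g)$ should come from the worst case $j=2$ versus $j^2-1$. This is why I will not use a plain $\ell^1$ norm but the weighted one
\[
a_k:=\sum_{j}(1+g^2 j^2)\,|c_{k,j}|
\]
(or the analogous weighted norm built from $d_{k,j}$).
\end{enumerate}

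\textbf{Step 3 (majorant).} Summing over $j$, each pair $(j_1,j_2)$ contributes to at most two values of $j$. Combining (a) and (b) I aim at an inequality of the form
\[
a_k\le D(g)\,(1+g^2D(g))\sum_{k_1+k_2=k}a_{k_1}a_{k_2}\quad(k\ge2),\qquad a_1=\tfrac{1+g^2}{g^2}.
\]
By induction, $a_k\le A_k$, where $A(x)=\sum_k A_kx^k$ solves $A=a_1x+K A^2$ with $K=D(1+g^2D)$. Hence $A_k$ is a Catalan number times $K^{k-1}a_1^k$, and it grows like $(4K a_1)^k$. Since $|\cos(jg\varphi)|\le1$ for real $\varphi$, the double sum in~\eqref{eq:ansatz} is dominated by $\sum_k a_k|\alpha_\ren|^k$. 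This dominating series converges for $|\alpha_\ren|<1/(4K)$ once the $a_1$-dependence is absorbed by rescaling $\alpha_\ren$, or, more cleanly, by normalising $c\mapsto g^2c$ first.

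\textbf{Expected main obstacle.} The delicate step is (b), together with making the constants come out exactly as $D(g)(1+g^2D(g))$. The weight $j_2^2$ can be as large as $k_2^2$, so it has to be absorbed by the denominator $B_{k,j}$ simultaneously in the sum and difference channels. The difference channel $j=|j_1-j_2|$ can have $j$ small while $j_2$ is large, and there $B_{k,j}$ does not grow with $j_2$. In that channel I would first try to use the linear growth in $k$ of $B_{k,j}$ together with $j_2\le k_2$. If that fails to close, I would put the weight $j^2g^2$ into the norm with a $D$-dependent coefficient and accept a slightly worse constant.
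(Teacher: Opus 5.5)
Your overall architecture matches the paper's proof: an $\ell^1$-type size per order $k$, a quadratic convolution inequality, a Catalan majorant, and the correct observation that the near-resonant divisors $B_{k,0}$ never have to be inverted inside the recursion. The gap is Step~2(b) together with the choice of norm, which is exactly the step you flag. You try to pay for $j_2^2$ with the \emph{output} divisor $B_{k,j}$ while using the weight $1+g^2j^2$. That weight lacks the part of $B_{k,j}$ that grows linearly in $k$, so the convolution inequality cannot follow from the triangle inequality, and the failure is not confined to the difference channel. Take $j_1=0$, so $j=j_2$. Relative to $|c_{k_1,0}|\,(1+g^2j_2^2)|c_{k_2,j_2}|$, the recursion produces the coefficient $g^2|(2-g^2)k_1-2|\,j_2^2/B_{k,j_2}$. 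For $k_1\approx k_2\approx j_2\approx k/2$ this is of order $(2-g^2)k/2$, so no constant $K$ works. A norm built from $d_{k,j}$ does not rescue this: rewriting the second factor $j_2^2c_{k_2,j_2}$ in terms of $d_{k_2,j_2}$ divides by $(2-g^2)k_2-2$ for $j_2\ge1$, which brings back exactly the small divisor you wanted to avoid.

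The missing idea is to weight each coefficient by \emph{its own} divisor, and to treat the second factor the way your (a) treats the first. Take $h_k=\sum_j|B_{k,j}c_{k,j}|$, the $\ell^1$ norm of the right-hand sides, with $h_1=1/g^2$.
\begin{itemize}
\item For $j\ge1$, $k\ge2$ you have $B_{k,j}\ge(2-g^2)+(j^2-1)g^2$, hence $j^2/B_{k,j}\le D(g)$. The term $\frac{1}{2-g^2}$ is the case $j=1$, and $\frac43\frac{1}{g^2}$ covers $j\ge2$, with $j=2$ the worst case. This gives $\sum_j j^2|c_{k,j}|\le D\,h_k$.
\item Writing $(2-g^2)k-2=B_{k,j}-g^2j^2$ gives $\sum_j|((2-g^2)k-2)c_{k,j}|\le(1+g^2D)h_k$.
\item Both bounds also hold at $k=1$ because $D\ge1$.
\end{itemize}
The output divisor then absorbs nothing, and both channels are covered at once by $\sum_j(\delta_{j_1+j_2,j}+\delta_{|j_1-j_2|,j})\le2$. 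You get $h_k\le g^2D(1+g^2D)\sum_{k_1+k_2=k}h_{k_1}h_{k_2}$. Since $B_{k,j}\ge1$ for $k\ge3/(2-g^2)$, you have $\sum_j|c_{k,j}|\le h_k$ for all but finitely many $k$.

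Your radius bookkeeping is also off. The prefactor $g^2$ of the recursion is missing from your $K$. Moreover, $a_1$ cannot be absorbed by rescaling: under $c\mapsto\lambda c$ the majorant constant becomes $K/\lambda$ and $a_1$ becomes $\lambda a_1$, so the radius $1/(4Ka_1)$ does not move. The stated radius is exactly $1/(4\kappa h_1)$ with $\kappa h_1=g^2D(1+g^2D)\cdot g^{-2}=D(1+g^2D)$. A fallback that accepts a worse constant would therefore not prove the theorem as stated.
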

The assumption $g^2 \neq 2 - \frac{2}{k}$ for all $k \geq 1$ can be weakened to requiring the inequality only for all even $k \geq 2$, because
the potential poles in the recursion stemming from odd $k$ are canceled. 
\begin{proof}
As explained above, it follows from the assumptions that $B_{k,j} \neq 0$ for all $0 \leq j \leq k$ and $k \geq 2$ that the solution of the recursion~\eqref{eq:recursion} exists uniquely.

For all $k \geq 3/(2-g^2)$, 
we have $B_{k,j}= (2-g^2)k -2 + j^2 g^2 \geq 1$.
Let $h_1 = 1/g^2$ and $h_{k} = \sum_{j =0}^k |B_{k,j} c_{k,j}|$ for $k \geq 2$.
In particular, we have for $k \geq 3/(2-g^2)$ 
\begin{align} \sum_{j=0}^k |c_{k,j} \alpha_\ren^k \cos(j g \varphi) | \leq |\alpha_\ren^k| \sum_{j=0}^k \frac{1}{|B_{k,j}| } |B_{k,j} c_{k,j}| \leq |\alpha_\ren^k| h_k \, . \end{align}
Hence, we need to show that $h_k$ grows at most as $(4D(g) (1+g^2 D(g)))^k$ for $k\rightarrow \infty$ to prove the statement.

If $1 \leq j \leq k$ with $k \geq 2$, we have $B_{k,j} \geq (2-g^2) (k-1) + (j^2-1)g^2 \geq 2-g^2 + (j^2-1)g^2$. So,
\begin{align} \frac{j^2}{B_{k,j}} \leq \frac{1}{ \frac{1}{j^2} (2-g^2) + (1-\frac{1}{j^2})g^2} \leq \max\left(\frac{1}{2-g^2}, \frac43 \frac{1}{g^2} \right) = D(g)\,, \end{align}
where the first term in the maximum bounds the $j=1$ case and the second term the $j \geq 2$ case.

Using the above inequality and $c_{1,1} = h_1$, we get the following estimate
\begin{gather} \label{eq:ineq1} \sum_{j = 1}^k j^2 |c_{k,j}| \leq \sum_{j = 1}^k \frac{j^2}{B_{k,j}} |B_{k,j} c_{k,j}| \leq D(g) h_k \, , \end{gather}
for all $k \geq 1$, where we also used $D(g) \geq 1$.
Similarly, we have the estimate
\begin{gather} \label{eq:ineq2} \sum_{j = 0}^k |((2-g^2)k -2) c_{k,j}| = \sum_{j = 0}^k | (B_{k,j} - j^2 g^2) c_{k,j}| \leq h_k + g^2 D(g) h_k \,. \end{gather}
Summing Eq.~\eqref{eq:recursion} over $0\leq j \leq k$ and using 
$\sum_{j=0}^k \delta_{|j_1\pm j_2|,j} \leq 1$,
we find that for $k \geq 2$,
\begin{gather} h_k \leq \frac{1}{2} 2 g^2 \sum_{\substack{k_1, k_2 \geq 1\\ k_1+k_2 = k }} \sum_{j_1=0}^{k_1} \sum_{j_2=0}^{k_2} |((2-g^2)k_1 -2) c_{k_1,j_1}| \cdot |j_2^2 c_{k_2,j_2}| \, . \end{gather}
Using Eqs.~\eqref{eq:ineq1} and \eqref{eq:ineq2}, we get
\begin{gather} h_k \leq g^2 D(g)(1+ g^2 D(g)) \sum_{\substack{k_1, k_2 \geq 1\\ k_1+k_2 = k }} h_{k_1} h_{k_2} \, . \end{gather}
Let $\kappa(g) = g^2 D(g)(1+ g^2 D(g))$ and define 
$H_k = g^{-2k} \kappa(g)^{k-1} C_{k-1}$, where $C_k$ is the Catalan number which fulfills
the recurrence, $C_0=1$ and $C_{k-1} = \sum_{ \substack{k_1, k_2 \geq 1,k_1+k_2 = k } } C_{k_1-1} C_{k_2-1} $ for all $k \geq 2$. The recurrence
$ H_{k} = \kappa(g) \sum_{ \substack{k_1, k_2 \geq 1, k_1+k_2 = k } } H_{k_1} H_{k_2} $
follows and we get $h_k \leq H_k$ for $k\geq 1$.
Famously, the Catalan numbers behave as $C_n \sim 4^n/(\sqrt{\pi} n^{3/2})$ for $n\rightarrow \infty$.
\end{proof}

We emphasize that the proof only works in the range $0 < g^2 <2$. If 
$g^2 > 2$ or $g^2 < 0$, then $B_{k,j}$ becomes arbitrarily close to zero for large enough $j \leq k$. 
This implies that the resulting values $c_{k,j}$ have a dense set of poles in those ranges of $g^2$.
So, the Ansatz~\eqref{eq:ansatz} fails to produce a well-behaved solution in that regime.

Figure~\ref{fig:profiles} shows explicit plots of the effective 
action $\Gamma^\tr_\ren$ computed from the recursion~\eqref{eq:recursion} truncated at $k \leq 300$.
Of course, the plots are inherently dependent on our choice of the minimal subtraction scheme.
In Figure~\ref{fig:derivativedata}, we plot a scheme-independent quantity:
the value of the four-point 1PI vertex $\partial_\varphi^4 \Gamma^\tr_\ren|_{\varphi=0}$ expressed as
a function of $\partial_\varphi^2 \Gamma_\ren^\tr|_{\varphi=0}$ (instead of $\alpha_\ren$).

\subsection{Numerically solving the model at large coupling}
\label{sec:numerics}
Because the expansion~\eqref{eq:ansatz} gives us a sound and concrete description of $\Gamma^\tr_\ren$
for a finite range of values of $\alpha_\ren$, we can use these values as initial conditions for the
numerical integration of the PDE~\eqref{eq:PDEren}. 
To do so, we pass to a more general expansion of the form
\begin{align} \label{eq:ansatz_numerical} \Gamma^{\text{tr}}_\ren = \sum_{j = 0}^J c_{j}(\alpha_\ren, g) \cos(j g \varphi)\, , \end{align}
where we choose a cut-off $J \geq 0$ and ignore modes of higher order.
Plugging this Ansatz into the PDE~\eqref{eq:PDEren} and neglecting modes beyond the cut-off gives us a non-linear, but finite-dimensional system of ODEs that the coefficients $c_{j}(\alpha_\ren, g)$
need to fulfill and the convergent $\alpha_\ren$-expansion provides us with the necessary initial values for small, non-zero $\alpha_\ren$.

Provided that $g^2$ is not near the boundary of the interval $(0,2)$, where the solution can become singular (in fact, the solution only becomes singular when $g^2 \rightarrow 0$), we can numerically integrate this system to effectively arbitrarily large values of $\alpha_\ren$ using standard ODE 
integrators with good stability properties.

\begin{figure}[p]
\centering
\begin{subfigure}[t]{0.41\linewidth}
\centering
\includegraphics[width=\linewidth]{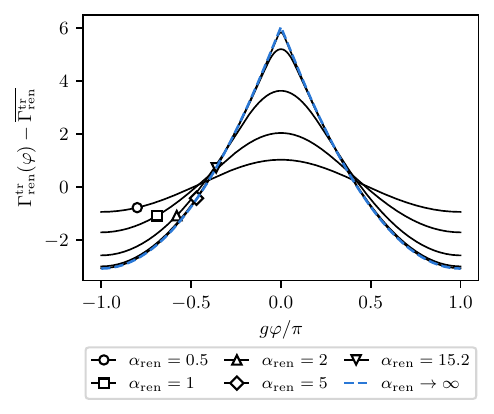}
\caption{$g^2 = 0.5$}
\label{fig:profiles_avg_0p5}
\end{subfigure}\hfill
\begin{subfigure}[t]{0.41\linewidth}
\centering
\includegraphics[width=\linewidth]{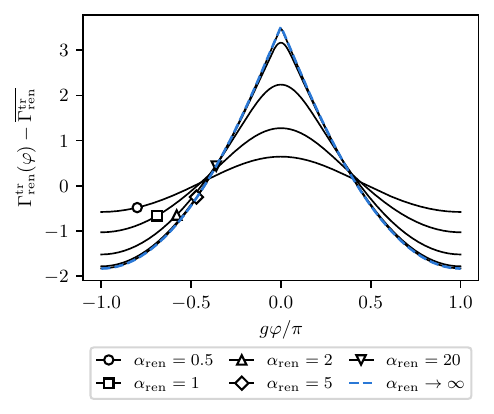}
\caption{$g^2 = 0.8$}
\label{fig:profiles_avg_0p8}
\end{subfigure}

\begin{subfigure}[t]{0.41\linewidth}
\centering
\includegraphics[width=\linewidth]{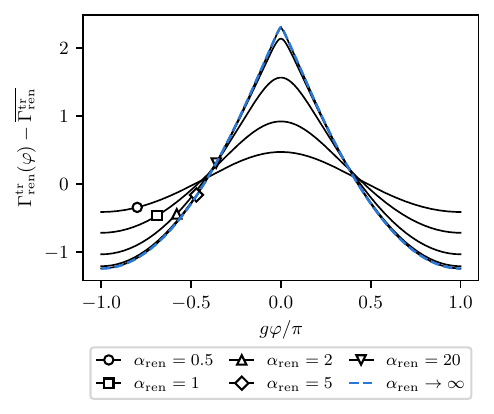}
\caption{$g^2 = 1.1$}
\label{fig:profiles_avg_1p1}
\end{subfigure}\hfill
\begin{subfigure}[t]{0.41\linewidth}
\centering
\includegraphics[width=\linewidth]{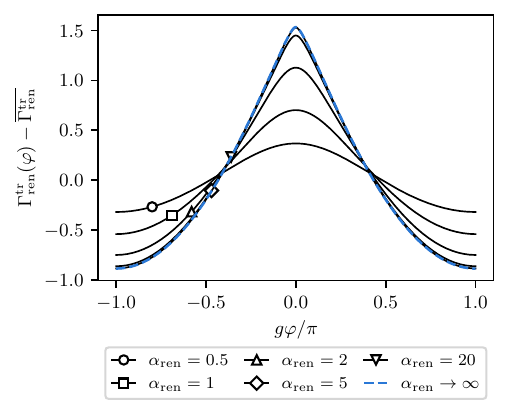}
\caption{$g^2 = 1.4$}
\label{fig:profiles_avg_1p4}
\end{subfigure}
\caption{Plots of the shifted effective action $\Gamma^{\tr}_\ren(\varphi) - \overline{\Gamma^{\text{tr}}_\ren}$
at couplings $g^2$ and $\alpha_\ren$ obtained using the 
numerical integration of the Ansatz~\eqref{eq:ansatz_numerical}. 
The blue dashed line depicts the scale-invariant solution computed using~\eqref{eq:newton}.
}
\label{fig:galerkin_profiles}
\end{figure}
Figure~\ref{fig:galerkin_profiles} shows the results of the numerical integration of the PDE~\eqref{eq:PDEren} using the Ansatz~\eqref{eq:ansatz_numerical} with $J = 320$ from the initial data of the recursion \eqref{eq:recursion}. We used 
SciPy's function \verb|solve_ivp| with \verb|method=Radau|  \cite{2020SciPy-NMeth} to perform this numerical integration. 
We subtracted the field-independent shift, $\overline{\Gamma^{\text{tr}}_\ren} = c_0(\alpha_\ren, g)$, which removes the zero-mode such that the approach of the solution becomes finite in the $\alpha_\ren \rightarrow \infty$ limit. This subtraction can also be understood as the inclusion of a vacuum counterterm. 

\subsection{Approaching the IR fixed point}
\label{sec:irfixedpoint}
The coupling $\alpha_\ren$ can be used to parameterize the renormalization group flow of the theory. From the expression for its $\beta$-function \eqref{eq:betaalpha}, one can see that we start at the Gaussian fixed point at $\alpha_\ren = 0$ and flow towards the IR fixed point when $\alpha_\ren \rightarrow \infty$.  At a renormalization group fixed point the theory becomes scale-invariant. 
Recall the discussion in Section~\ref{sec:trop_ren_group_fix}, where we showed that the scale-invariant solutions in $D=2$ solve Newton's equation~\eqref{eq:newton}. In 
this section, we will illustrate how in the limit $\alpha_\ren \rightarrow \infty$, after renormalization of vacuum contributions, our explicit solution of the renormalized tropical loop equation \eqref{eq:PDEren} approaches one of the scale-invariant solutions indexed by the parameter $E$.

The tropical theories, which we constructed using Eq.~\eqref{eq:ansatz}, have a periodic effective action with period $T = \frac{2\pi}{g}$.  For $g^2 \rightarrow 2$, we recover the limiting period $T = \sqrt{2} \pi$ from Section~\ref{sec:trop_ren_group_fix},
which is attained when the `energy' $E$ of the oscillation is small. As the period of the periodic scale-invariant solutions is a monotonically increasing function $T(E)$ and the coupling constant $g^2$ does not flow, we can invert this function and get the value of $E$ associated with every value of $g^2 \in (0,2)$:  $E(g) = T^{-1}( \frac{2\pi}{g})$.
Because $\frac{2\pi}{g} \rightarrow \infty$ as $g \rightarrow 0^+$, the range $g^2 \in (0,2)$ covers the whole range of positive values $E(g) \in (0,\infty)$.
Hence, the set of Lagrangians~\eqref{eq:S} of tropical sine-Gordon theory indexed by the values of $g^2\in (0,2)$ covers the whole set of periodic scale-invariant tropical QFTs in $D=2$ with the positivity property $E >0$. 
Each Lagrangian flows towards the respective fixed point in the IR.

We remark that in the UV we can interpret the theory as a real free scalar QFT with shift symmetry valued in $\R$. The potential $\alpha \cos(g\varphi)$ breaks this symmetry down to the discrete shifts by $\frac{2\pi n}{g}$, which form a subgroup $\frac{2\pi n}{g} \,\mathbb{Z} \subset \R$. The quantized theory retains the same symmetry. Hence, we have the expansion~\eqref{eq:ansatz} of the effective action which respects this residual, discrete symmetry.

\begin{figure}[t]
\centering
\includegraphics[width=0.74\linewidth]{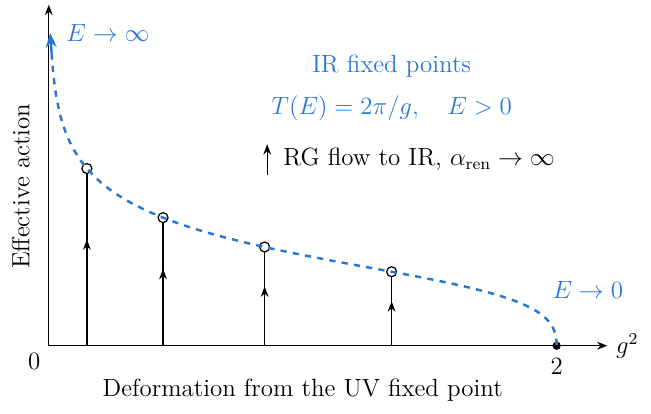}
\caption{Schematic view of the tropical sine-Gordon RG flow. The x-axis shows values of $0<g^2<2$ parameterizing the bare sine-Gordon Lagrangians. The y-axis schematically represents the effective actions at different scales, given by different values of $\alpha_\ren$. Black arrows schematically indicate the RG flow with increasing $\alpha_\ren$ at fixed $g^2$. The blue dashed curve is the limit $\alpha_\ren \rightarrow \infty$.}
\label{fig:traj}
\end{figure}
Figure~\ref{fig:galerkin_profiles} shows the periodic scale-invariant solution as a dashed blue line for different values of $g^2$. For each value of $g^2$, the respective IR fixed point is clearly approached in the $\alpha_\ren \rightarrow \infty$ limit. In Figure~\ref{fig:traj}, we illustrate the RG behaviour of the theory: starting from a free theory in the UV, one introduces a relevant cosine deformation parameterized by the coupling $g$ and flows down via the running coupling $\alpha_\ren$ to a stationary solution of \eqref{eq:newton} in the IR.

\subsection{Comparison to the Lukyanov--Zamolodchikov formula}
\label{subsec:comparison-sine-gordon}

In \cite{Lukyanov1996}, 
Lukyanov and Zamolodchikov conjectured the following formula (LZ formula) for the generating function of the field $\varphi$ evaluated at the origin in
sine-Gordon theory~\eqref{eq:S}, 
\begin{gather} \begin{gathered} \label{eq:LZ_one_point} \frac{1}{Z} \int \mathcal{D} [\varphi]e^{ia \varphi(0)} e^{-S_{sG}[\varphi,0] } = \left( \frac{ m\,\Gamma\!\left(\frac{1+\xi_{\mathrm{LZ}}}{2}\right) \Gamma\!\left(1-\frac{\xi_{\mathrm{LZ}}}{2}\right)} {4\sqrt{\pi}} \right)^{\frac{a^2}{4\pi}} \\
 \times \exp\!\left( \int_0^\infty \frac{\dd t}{t} \left( \frac{\sinh^2\!\left(\frac{agt}{4\pi}\right)} {2\sinh\!\left(\frac{g^2t}{8\pi}\right) \sinh t\cosh\!\left(\left(1-\frac{g^2}{8\pi}\right)t\right)} -\frac{a^2}{4\pi}e^{-2t} \right) \right) \,, \end{gathered} \end{gather}
where $a$ is a scalar parameter, $m$ is a specific sine-Gordon mass scale\footnote{See Eq.~(12) of \cite{Lukyanov1996} for details on the exact relationship with the Lagrangian mass in sine-Gordon theory.},
$\xi_{\text{LZ}}=g^2/(8\pi-g^2)$
and $Z$ is the normalization constant (Eq.~(20) of \cite{Lukyanov1996}).
The one-dimensional integral on the right-hand side can be evaluated 
using standard quadrature techniques. 

\begin{figure}[t]
\centering
\begin{subfigure}[t]{0.49\linewidth}
\centering
\includegraphics[width=\linewidth]{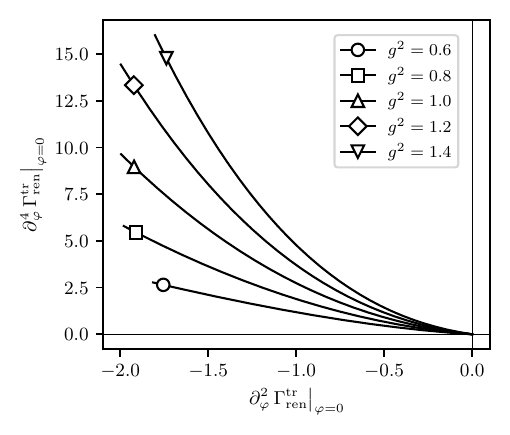}
\caption{Tropical sine-Gordon theory}
\label{fig:derivativedata_tr}
\end{subfigure}\hfill
\begin{subfigure}[t]{0.49\linewidth}
\centering
\includegraphics[width=\linewidth]{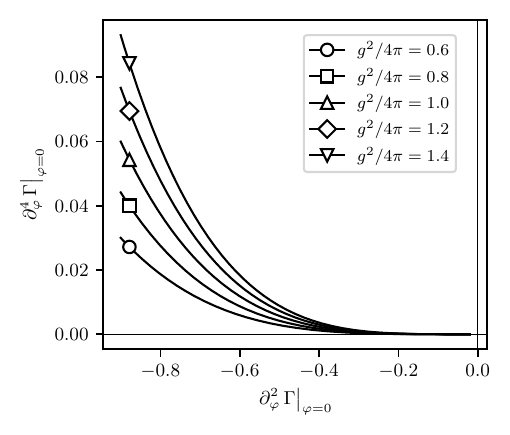}
\caption{Undeformed sine-Gordon from the LZ formula}
\label{fig:derivativedata_lz}
\end{subfigure}
\caption{Fourth versus second $\varphi$-derivative of the effective
action at $\varphi = 0$ evaluated at $x=0$ in undeformed sine-Gordon theory (recall that $\varphi$ is a function of the position $x$ before tropicalization). %
By construction, the graphs are scheme independent.}
\label{fig:derivativedata}
\end{figure}

Figure~\ref{fig:derivativedata} shows 
the value of $\frac{\partial^4 \Gamma}{\partial \varphi^4} $ versus  $\frac{\partial^2 \Gamma}{\partial \varphi^2}$. From the LZ formula~\eqref{eq:LZ_one_point}, we used the standard Legendre-transform procedure to extract the value of the effective action. The comparison of these two derivatives has the convenient property that it is renormalization group invariant:
we effectively remove the dependence on $\alpha_\ren$ and $m$ on the respective side.
The plots in Figure~\ref{fig:derivativedata} show the qualitatively similar behaviour of the traditional sine-Gordon model and
the tropicalized model. Note, however, the very different scales of the two plots.
We translated the coupling $g^2 \rightarrow g^2/ 4 \pi$ as suggested by the $\beta$-function dictionary above.

\subsection{Large-order behaviour of the \texorpdfstring{$g^2$}{g-squared}-perturbative expansion}
\label{sec:largeorder}

In the preceding sections, we discussed the solution of the PDE~\eqref{eq:PDEren} and thereby the solution 
of the renormalized tropical sine-Gordon model. We constructed the solution via a series expansion in the 
parameter $\alpha_\ren$, which has a non-zero radius of convergence. This is unusual for a perturbative computation in
QFT. 
To further illuminate this interesting feature, we study the expansion in $g^2$ in this section.

First,  we propose a suitable, slight variant of our original tropical model for such an investigation. The LZ formula~\eqref{eq:LZ_one_point} reveals that the dependence on the effective mass $m$, which corresponds to the coupling $\sqrt{\alpha}$, drops out of all the $n$-point functions except the two-point function. A power-counting argument explains why: in the undeformed sine-Gordon model every vertex brings a power of $\alpha$, every propagator an $\alpha^{-1}$, and every loop integration an $\alpha^{D/2}$. The latter two are determined by dimensional analysis alone, as $\alpha$ supplies the only scale in the momentum integrals. An amputated graph $G$, stripped of the overall momentum conservation, therefore carries the weight
\begin{align} \label{eq:alphapowercounting} \alpha^{|V_G|} \, \alpha^{-|E_G|} \, \alpha^{\frac{D}{2} L(G)} = \alpha^{|V_G|-|E_G|+L(G)} \, \alpha^{-\frac{\varepsilon}{2} L(G)} = \alpha^{1-\frac{\varepsilon}{2} L(G)} \, . \end{align}
The Euler characteristic formula gives $|V_G|-|E_G|+L(G)=1$. Thus at $D=2$, i.e. in the limit $\varepsilon = 2-D \rightarrow 0$, every graph carries just a single overall power of $\alpha$.  In contrast, in the tropical theory every propagator is equal to $1$, and the infinite series of the two-valent vertices does not resum into a propagator, as it would happen in an undeformed QFT. 
A way to resolve this issue is to remove these two-valent insertions in the boundary condition~(\ref{eq:Boundary_Cond}), and set $\alpha_\ren =1$ to avoid the extra weighting of the vertices that also carry a power of $\alpha_\ren$. 
We thus define $\widehat\Gamma^\tr_\ren$ to satisfy the same PDE~\eqref{eq:PDEren} as $\Gamma^\tr_\ren$, but with the new boundary condition with the constant and quadratic terms subtracted:
\begin{equation}
\label{eq:Boundary_Cond_ren_quadratic_subtracted}
    \lim_{\hbar \rightarrow0}
    \hbar \,
    \widehat\Gamma^{\text{tr}}_\ren
    (\hbar^{-\frac{1}{2}} \varphi, \alpha_\ren, \hbar^{\frac{1}{2}} g)
    =
    -\frac{\alpha_\ren}{g^2} \left(
        1-\cos{g\varphi}-\frac{g^2\varphi^2}{2}
    \right).
\end{equation}
Notice that as $(p^2+m_0^2)^\xi+ \alpha \neq (p^2+m_0^2+\alpha)^\xi$, we are not merely performing a scheme change.
We modify the tropical theory to be closer to the original sine-Gordon model with respect to the mass scaling behaviour of individual Feynman diagrams.

Let $x= g \varphi$. We will expand the solution space of the PDE~\eqref{eq:PDEren} with the boundary 
condition~\eqref{eq:Boundary_Cond_ren_quadratic_subtracted} in powers of $g^2$ while keeping $x$ fixed.
As discussed in Section~\ref{sec:ambiguities}, $\widehat\Gamma^{\text{tr}}_\ren$ is not completely 
determined by these conditions. We will use the following additional renormalization scheme choice,
\begin{align} \label{eq:rencondhat} \widehat\Gamma^{\text{tr}}_\ren (\varphi, \alpha_\ren, g) = -\frac{\alpha_\ren}{g^2} \left( 1-\cos(x)-\frac{x^2}{2} \right) + \text{``higher-order terms''} \,. \end{align}
Here, ``higher-order terms'' include terms of the form $\alpha_\ren \log (\alpha_\ren), \alpha_\ren^2 \log^3 (\alpha_\ren), \ldots$
The reason that such terms appear is that the expansion at $g=0$ is highly singular and resonant.
Our scheme choice here makes sure that, when $\alpha_\ren \rightarrow 1$, the expansion agrees with the diagrammatic expansion where each diagram with a two-valent vertex is set to zero.
We can then expand this modified effective action perturbatively in $g^2$,
\begin{equation} \label{eq:Gammahatseries}
    \widehat{\Gamma}^{\mathrm{tr}}_{\mathrm{ren}}
    = \sum_{L\geq0}g^{2(L-1)}F_L(x,\alpha_\ren) \, , 
    \qquad \text{ with } \qquad
    F_0(x,\alpha_\ren)=-\alpha_\ren\left(1-\cos x -\frac{x^2}{2}\right) \, .
\end{equation}
For example, at one loop, the PDE gives, 
\begin{equation}
\label{eq:quadratic_subtracted_oneloop}
    F_1
    =
    \alpha_\ren
    \left[
        \frac{x^2}{4}\log \alpha_\ren
        -
        \frac{1-\cos(x)}{2}
        \log\!\left(1-\alpha_\ren(1-\cos(x) )\right)
    \right] \, .
\end{equation}
Here, we again find emergent $\log$-terms after the resummation. 
To avoid issues with different branch choices of the logarithm, we assume $ \alpha_\ren > 0 \, . $
This sign is inherited from the non-tropical sine-Gordon theory, which has a positive mass.

If we set $\alpha_\ren=1$, the solution 
$ \widehat\Gamma^{\text{tr}}_\ren $
generates sine-Gordon graphs without two-valent vertices that are weighted by their tropical Feynman measure. 
Because the original sine-Gordon theory does not feature two-valent vertices, we argue that 
$ \widehat\Gamma^{\text{tr}}_\ren $
is a better proxy for the original sine-Gordon theory in the context of studying the perturbative 
expansion in $g$ and its large-order behaviour.

Evaluating $\widehat\Gamma^{\text{tr}}_\ren$ at $\alpha_\ren=1$, for instance using a recursive Ansatz as above, yields the following expansions for the two- and four-point vertices:
\begin{align} \label{eq:quadratic_subtracted_two_four_point}   \left. \partial_\varphi^2 \widehat\Gamma^\tr_\ren \right|_{\varphi=0} &= g^4 +\frac{35}{16}g^6 +\frac{1253}{216}g^8 + \mathcal O(g^{10}) \, , \nonumber\\
  \left. \partial_\varphi^4 \widehat\Gamma^\tr_\ren \right|_{\varphi=0} &= g^2 +3g^4 + \frac{19}{2}g^6 +\frac{1919}{48}g^8 + \mathcal O(g^{10}) \, . \end{align}

We can expand $\widehat\Gamma^\tr_\ren$ in $n$-point functions in the usual way:
\begin{equation}
\widehat\Gamma^\tr_\ren \big|_{\alpha_\ren=1} = \sum_{n, L  \geq 0} \widehat \Gamma_{n,L} \frac{\varphi^n}{n!} {g^{2(L-1)+n}} \, ,
\end{equation}
where the $L$-loop $n$-point function $\widehat \Gamma_{n,L}$ is now a number,
{$\widehat\Gamma_{n,L}=\left.\partial_x^nF_L(x,\alpha_\ren)\right|_{x=0,\,\alpha_\ren=1}$}.

We computed the numbers $\widehat \Gamma_{n,L}$ to $L=40$ loops in exact arithmetic and to $L=60$ loops at $200$-bit precision for $n\in \{2,4\}$. We did so by translating Eqs.~\eqref{eq:PDEren} and \eqref{eq:rencondhat} into a system of recursive ODEs fulfilled by $\partial_x^n F_L(x, \alpha_\ren)|_{x=0}$ and then setting $\alpha_\ren=1$. We find that these data are consistent with the asymptotic form
\begin{equation}
\label{eq:tropical_even_point_asymptotic_ansatz}
    \widehat \Gamma_{2n,L} 
    \underset{L\to\infty}{\sim}
    s^{-L} \,
    \Gamma(L+2n-1)
    \left(
        K_{2n}
        +\frac{a_{2n}}{\sqrt{L}}
        +\frac{b_{2n}}{L}
        +\mathcal O(L^{-\frac{3}{2}})
    \right),
\end{equation}
where the constant $s$ is approximately,
\begin{equation}
\label{eq:tropical_leading_Borel_singularity}
  s \approx 1.518  \, .
\end{equation}
The empirically obtained values of the constants $K_{2n}$, $a_{2n}$, and $b_{2n}$ vary heavily with the precise choice of the fit range in $L$.
So, we do not report fit values for them.
This indicates that the Borel transform of any $n$-point function $ \left. \partial_\varphi^n \widehat\Gamma^\tr_\ren \right|_{\varphi=0} $ will have a singularity at $z \approx 1.518$ in the Borel plane associated with the transformation 
\begin{align} \mathrm B \, \widehat\Gamma^\tr_\ren \big|_{\alpha_\ren=1} = \sum_{n, L \geq 0} \frac{\widehat \Gamma_{n,L}}{L!} \frac{x^n}{n!} {z^{L}} \,. \end{align}

Including the $\sqrt{L}^{-k}$ terms in the Ansatz~\eqref{eq:tropical_even_point_asymptotic_ansatz} improves the fit quality drastically.
Similar non-integer power corrections have been empirically shown to appear in $D=4$ tropical $\phi^4$ theory~\cite{Balduf:2025idm}. The role of such non-integer power corrections to the large-order behaviour is something that is not predicted by older studies of perturbation theory at large-order~\cite{Parisi:1976zh,Collins:1977dw}. An analytic and physical interpretation of these correction terms would be of high interest. We postpone such interpretations to future work.

\subsection{Tropical instantons and tropical QFT at large order}
\label{subsec:tropical-instanton}
Remarkably, the location of the Borel singularity $s$ in~\eqref{eq:tropical_leading_Borel_singularity} can also be obtained through an independent, albeit heuristic, argument. We stress that the discussion in this section is not rigorous and should be understood as a plausibility argument rather than a proof: several steps rely on assumptions that we do not justify in full. A careful treatment remains to be carried out. Nevertheless, the fact that this line of reasoning reproduces the value of $s$ found above suggests that the underlying ideas are sound. Placing them on a rigorous footing would provide a systematic framework for studying tropical QFTs at large order, and we regard this as an interesting problem for future work.

The alternative way of computing $s$ has many similarities to instanton methods in QFT, so we call it a \textit{tropical instanton} analysis. For us, a tropical instanton is a solution of the large-loop-order limit of the tropical loop equation that controls the singularities in the Borel plane. 

In the following, we will make an Ansatz for the asymptotic behaviour of the coefficient functions $F_{L}$ in \eqref{eq:Gammahatseries}. To do so it is convenient to introduce a new notation. As before, we will use $x = g\varphi$, but we will also write $t = \log \alpha_\ren$, which will take the role of a ``time'' variable. Note  that $\alpha_\ren \partial_{\alpha_\ren} = \partial_t$.

At large loop order we expect $F_L$ to take a factorial-over-power form
\begin{equation}
\label{eq:FL}
    F_L(x,\alpha_\ren)
    =
    \Gamma(L+\beta)
    \sum_{\sigma=\pm}\frac{\chi_\sigma(x,t)}{S_\sigma(x,t)^{L}} 
    +  o\left(\frac{\Gamma(L+\beta)}{S_\pm(x,t)^L}\right)
    \, \quad \text{ as } \quad 
    L\to\infty \, ,
\end{equation}
where $\beta$ is a number, whose precise value will not be important to us, and $\chi_\pm$ and $S_\pm$ are functions as indicated. 
The lower-case $o$ means that we neglect all terms that grow slower than $\frac{\Gamma(L+\beta)}{S_+(x,t)^L}$ and $\frac{\Gamma(L+\beta)}{S_-(x,t)^L}$.
We therefore also neglect any possible $1/\sqrt{L}$ corrections that we empirically observed
in \eqref{eq:tropical_even_point_asymptotic_ansatz}. An analytic explanation of these $1/\sqrt{L}$ correction terms would be of high interest, as such correction terms are unexpected in conventional instanton computations. We postpone the interpretation of these correction terms to future~work.

Since $F_L$ is even in $x$ we introduce two branches, $S_-(x,t)=S_+(-x,t)$ and $\chi_-(x,t)=\chi_+(-x,t)$, so that their sum is also even in $x$. Below, $S$ and $\chi$ denote either one of the branches and any equation, including the Hamilton--Jacobi equation~\eqref{eq:HamJac}, holds separately for either $S_{+}$ or $S_{-}$. 
Using this Ansatz, we will aim for expressions for the functions $S_\pm(x,t)$. 

The $L$-loop $2n$-point coefficients from the last section are obtained by differentiating $2n$ times with respect to $x$, setting $x = 0$ and further setting $t=0$ or equivalently $\alpha_\ren=1$, 
\begin{equation} 
    \widehat{\Gamma}_{2n,L} = \left.\partial_{x}^{2n} F_L(x,e^t)\right|_{x=t=0} \, .
\end{equation}
Hence, asymptotically we expect
\begin{equation} \label{eq:Gamma2n}
    \widehat{\Gamma}_{2n,L} 
    \sim
    C_n
    \frac{\Gamma(L+\beta+2n)}{S(0,0)^{L}} \, \quad \text{ as } \quad 
    L\to\infty \, ,
\end{equation}
where $S(0,0) = S_{\pm} (0,0)$, $C_n$ is an appropriate constant, and so the Borel singularity controlling the expansion is unique.  We mention that from Eq.~\eqref{eq:Gamma2n} one can see immediately that in order for the Ans\"atze, \eqref{eq:tropical_even_point_asymptotic_ansatz} and \eqref{eq:FL}, to agree, one should take $\beta=-1$. 

We obtain the equation for $S(x,t)$ by plugging in the Ansatz~\eqref{eq:Gammahatseries}  into the loop equation \eqref{eq:PDEren} and changing coordinates from $\alpha_\ren$ to $t$. Multiplying both its sides by $1-g^2\partial_x^2\widehat{\Gamma}^{\mathrm{tr}}_{\mathrm{ren}}$, we get
\begin{equation}
\label{eq:loopmult}
    \Bigl(1-g^2\partial_x^2\widehat{\Gamma}^{\mathrm{tr}}_{\mathrm{ren}}\Bigr)
    \bigl[-2+(2-g^2)\partial_t\bigr]\widehat{\Gamma}^{\mathrm{tr}}_{\mathrm{ren}}
    =g^2\partial_x^2\widehat{\Gamma}^{\mathrm{tr}}_{\mathrm{ren}} \, .
\end{equation}
We define $M(x,t)$ by
\begin{equation} \label{eq:Mdef}
    M(x,t) = 1 - \partial_x^2F_0 =  1 - e^{t}(1- \cos x) \, ,
\end{equation}
with $F_0$ from~\eqref{eq:Gammahatseries}. Restricted to order $g^{2(L-1)}$ for $L\geq 1$,
Eq.~\eqref{eq:loopmult} reads in this notation
\begin{equation}
\label{eq:trloopas}
    M \Big((-2+2\partial_t)F_L-\partial_tF_{L-1} \Big)
    =\partial_x^2F_{L-1}
    +\sum_{A=1}^{L-1}\partial_x^2F_A \, \Big((-2+2\partial_t)F_{L-A}-\partial_tF_{L-A-1} \Big) \, .
\end{equation}
In particular, at one loop we recover \eqref{eq:quadratic_subtracted_oneloop},
\begin{equation}
\label{eq:F1}
    F_1(x,t)
    =e^t\left[
        \frac{x^2}{4}t
        -\frac{1-\cos x}{2}
        \log M
    \right] \, .
\end{equation}
Using $(L+\beta)\Gamma(L+\beta) = \Gamma(L+\beta+1)$ and keeping only the dominant growth rate gives,
\begin{equation}
    \partial_tF_L\sim- (\partial_tS)\;\chi\,
    \frac{\Gamma(L+\beta+1)}{S^{L+1}} \, ,
    \qquad
    \partial_x^2F_{L-1}\sim(\partial_xS)^2\,\chi\,
    \frac{\Gamma(L+\beta+1)}{S^{L+1}} \, .
\end{equation}
So, in Eq.~\eqref{eq:trloopas} the leading terms are $2M\partial_tF_L$
on the left, and $\partial_x^2F_{L-1}$ together with the boundary
term $A=L-1$ of the sum on the right.  One can see that the other terms are subleading. Cancelling the common factor
$\chi\,\Gamma(L+\beta+1)/S^{L+1}$ and using \eqref{eq:F1} gives
\begin{equation}
\label{eq:HamJac}
    -2M\,\partial_tS
    =(\partial_xS)^2\,\Biggl(1+\frac{1-M}{M}\Biggr)
    =\frac{(\partial_xS)^2}{M}
    \Rightarrow
    \partial_t S=-\frac{(\partial_x S)^2}{2M^2} \, .
\end{equation}
This is the textbook \textit{Hamilton--Jacobi equation} for a non-relativistic particle of position- and time-dependent ``mass'' $M(x,t)^2$ (note the square) moving in one dimension. A more direct route to \eqref{eq:HamJac} is given by inserting the instanton Ansatz
\begin{equation}
\label{eq:transseries}
    \widehat\Gamma^{\tr}_{\ren}=\Gamma_{0} + e^{-S(x,t)/g^2} \Gamma_{1} + \ldots  \, ,
\end{equation}
and treating the $\exp( {-S(x,t)/g^2})$ as a new formal parameter in the PDE. %
Here, $\Gamma_1$ is the first instanton fluctuation of the effective action.
See, e.g., \cite{Borinsky:2016mqu,Borinsky:2017hkb} for an explanation for the equivalence of this Ansatz with the large-loop order analysis.
The physical perspective on the resurgence framework provides more concrete interpretations of the correspondence between instanton and large-order perturbative phenomena (see, e.g., \cite{dunne2026introductory}). We believe that a resurgence analysis of tropical sine-Gordon theory and the tropical loop equation would be a highly fruitful endeavor. However, such an analysis is beyond the scope of the present paper.

We hence see that the large loop-order behaviour of the perturbation series is governed by classical mechanics with the field considered as a coordinate of a particle moving in ``time'' $t$, which encodes the coupling $\alpha_\ren$. The location of the Borel singularity $S(0,0)$ is thus given by the value at the origin of the classical action, considered as a function of the endpoint of the particle's trajectory. 

Recall that the general solution to \eqref{eq:HamJac} is given by the familiar classical action
\begin{equation} \label{eq:ClasAction}
    S = S(x,t) = \int_{t_0}^{t} \dd \tau \mathcal{L}(x(\tau),\dot{x}(\tau),\tau) \, ,\quad \text{ where } \quad \mathcal{L}(x,\dot{x},t) = \frac{M(x,t)^2}{2} \dot{x}^2 \, ,
\end{equation}
evaluated on the extremizing trajectory given by the solution to the Euler--Lagrange equation,
\begin{equation} 
\label{eq:ELeq}
    M \ddot x=-  \dot x^2 \, \partial_x M - 2  \dot x \, \partial_t M\, ,
\end{equation}
with initial conditions at $t=t_0$ for the position and velocity of the ``particle''. 
As usual, $\dot x$ is the $t$-derivative of $x$.
Given an initial velocity at the point $(x_0,t_0)$ we can  solve the Euler--Lagrange equation and compute the action at the point $(x,t) = (0,0)$ by plugging in the trajectory connecting the two points into the integral \eqref{eq:ClasAction}. 

Obtaining such initial conditions is challenging. Here, we propose the following construction.
We define a surface $\Sigma$ in $(x,t)$ space:
\begin{equation}
    \Sigma:=\{(x,t):M(x,t)=0\} \, \, .
\end{equation}
Note that the coefficient functions $F_{L}(x,t)$ diverge when $(x,t)$ is close to the surface $\Sigma$.
For example, $F_1(x,t)$ given in Eq.~\eqref{eq:F1} has a logarithmic singularity on $\Sigma$. Generally, from \eqref{eq:trloopas} we can infer the behaviour
\begin{equation}
\label{eq:FMScaling}
    F_L \sim \frac{1}{M^{3(L-1)}} \, \, \text{for } \, M \rightarrow 0 \, .
\end{equation}
Indeed, for $L=1$ near $M=0$ we have from \eqref{eq:F1} the scaling $F_1 \sim \log{M}$. Then, assuming the behaviour \eqref{eq:FMScaling}, and using $\partial_t = - \partial_M$ near $M=0$, we can write\footnote{This is only valid near a generic point of $\Sigma$, $M_x\neq0$.}
\begin{equation}
 \partial_tF_k\sim M^{2-3k},
 \qquad \partial_x^2F_k\sim M^{1-3k} \, ,
 \qquad 1\leq k<L \, .
\end{equation}
Every term in the convolution sum on the RHS of \eqref{eq:trloopas} then gives at leading order
\begin{equation}
 \partial_x^2F_A\,2\partial_tF_{L-A}
 \sim M^{1-3A}M^{2-3(L-A)}=M^{3-3L} \, .
\end{equation}
The separate term $\partial_x^2F_{L-1}$ is one power less singular. Thus we get
\begin{equation}
 2M\partial_tF_L\sim M^{3-3L}
 \Rightarrow F_L\sim M^{-3(L-1)} \, .
\end{equation}

On the real $M>0$ side, the leading coefficients are positive by
induction from $F_1$, so they do not cancel. All convolution terms
contribute at the leading pole order. The exceptional points
$M_x=0$ must be treated separately. The $M$-dependence of $F_L$ in \eqref{eq:FMScaling} together with the general
Ansatz \eqref{eq:FL} suggests that (at least locally) the zeroes of $S(x,t)$ lie on the surface $\Sigma$. We now make the assumption that the relevant trajectory starts off from the surface $\Sigma$, where the action $S$ vanishes. For trajectories starting from a generic point of $\Sigma$ (i.e.\ one with $M_x\neq0$), the Hamilton--Jacobi
equation \eqref{eq:HamJac} then fixes the order of vanishing of $S$ near $\Sigma$,
\begin{equation} \label{eq:SMScaling}
    S(x,t) \sim M(x,t)^3 \,\, ,
\end{equation}
since an Ansatz $S\propto M^p$ solves \eqref{eq:HamJac} near $\Sigma$ with a finite non-zero coefficient only for $p=3$. This is consistent with \eqref{eq:FMScaling}: the factor $M^{-3(L-1)}$ of the leading singularity at fixed $L$ matches $S^{-L}$ in \eqref{eq:FL}. With this assumption fixed, we obtain below only two valid trajectories, one of which reproduces the singularity location found in the perturbative analysis of the previous sections.

A simple argument using Eq. \eqref{eq:HamJac} shows how the condition \eqref{eq:SMScaling} translates into the initial condition for the particle: given some $x_0$ on $\Sigma$ the initial velocity is 
\begin{equation}
    \dot{x} \bigr|_{\Sigma} = -2 \tan{\frac{x_0}{2}} \, .
\end{equation}
The problem thus boils down to finding the right value for the pair $(x_0,t_0)$ on the surface $\Sigma$, such that the particle lands at the origin $x=0$ at time $t=0$. We limit attention to the region $x \in (0, 2\pi)$ which is one period interval for the original perturbative solution. For convenience we introduce an auxiliary variable $r$ parameterizing a given trajectory through $(x,t)$-space,
\begin{equation}
    r = 2 \sqrt{-e^t M \dot{x}} \, ,
\end{equation}
where $M \dot{x}$ is always negative on the trajectories of interest. Indeed, using~\eqref{eq:ELeq}, one has $d (M \dot{x})/dt=-M_t\dot x$, hence $d(M \dot{x})/dx=-M_t=e^t(1-\cos x)\ge0$. For trajectories starting from $M=0$ and traveling to the left with decreasing $x$ (see Figure~\ref{fig:trajinst}), the variable $r$ starts from zero and increases. In terms of $r$, the Euler--Lagrange equation \eqref{eq:ELeq} can be rewritten as the system
\begin{align} \label{eq:ELsystem} \frac{dx}{dr} &= - \frac{r}{2} e^{-t} \, ,\\
 \frac{dt}{dr} &= \frac{2 M}{r} \, . \end{align}
One observes that $x$ decreases along any extremizing trajectory, whereas the flow of ``time'' is generally not monotonic (this is, of course, not a problem since $t$ is not an actual physical time).  The action accumulated along an extremizing trajectory is then given by 
\begin{equation}
    \frac{dS}{dr} = \frac{r}{4} \left(\left(\frac{dx}{dr}\right)^2 - x + \sin{x} \right) +\frac{d}{dr}\left[\frac{r^2}{8}(x-\sin x)\right]\, ,
\end{equation}
where the boundary term vanishes at the launch $r=0$ and at the endpoint $x=0$. We integrate the system in $r$ until we reach the value $r=r^*$ such that $x(r^*) = 0$ and record the corresponding value of time, using the relation $e^{-t} = - \frac{2}{r} \frac{dx}{dr}$. We then scan over all $x_0$ to get the landing at $t=0$. There are, in fact, only two real trajectories, displayed in green (dashed) and red (solid) in Figure~\ref{fig:trajinst}, satisfying these requirements.  
The red one gives precisely the value observed in the numerical analysis above,
\begin{equation}
\label{eq:instanton_S00}
    S(0,0) \simeq 1.518 \, .
\end{equation}
Importantly the green trajectory has a \textit{smaller} absolute value of $S(0,0)$, and so, with all things being equal, should dominate the asymptotic expansion. Supported by the data, we hypothesize that there must exist a selection mechanism that rules out this trajectory and leaves only the red one. The examination of that question would require a more thorough analysis with the derivation of the Stokes constants $\chi$ corresponding to both trajectories.

Curiously, there also exists a trajectory starting off from the ``minimum of the well'' (the lowest point of the surface $\Sigma$) that also arrives at $(x,t)=(0,0)$. It does not, however, satisfy the scaling \eqref{eq:SMScaling} and in that sense is disconnected from the perturbation theory.

We finally note also that the method is perfectly suitable for analyzing the large-order behaviour of the $\alpha_\ren$-dependent $\Gamma^{\tr}_\ren$. It satisfies the same tropical loop equation, but with a different boundary condition which will yield a different surface $\Sigma$. 
\begin{figure}[t]
\centering
\includegraphics[width=0.94\linewidth]{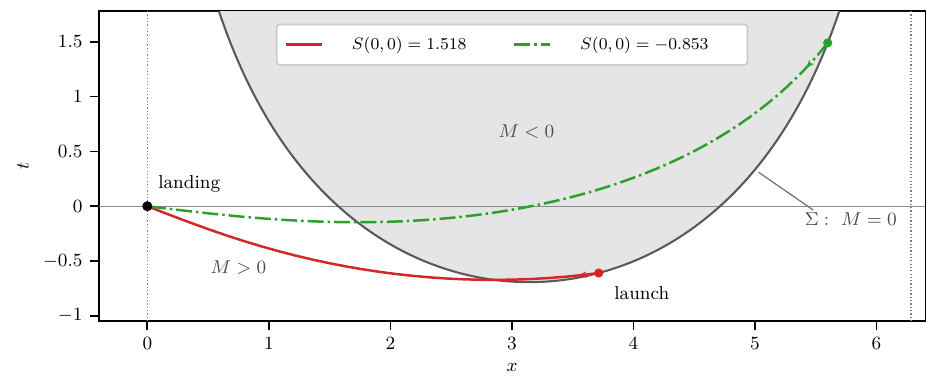}
\caption{The surface $\Sigma$: $t=-\log(1-\cos x)$ for $x \in (0 , 2 \pi)$, and the two trajectories (green/dashed and red/solid) landing at $(x,t)=(0,0)$. The red one has $S(0,0) \simeq 1.518$. If the action branch associated with the red trajectory is denoted by $S_+$, its reflection $S_{-}(x,t) = S_{+}(-x,t)$ is generated by the mirrored trajectory with $x \rightarrow -x$.}
\label{fig:trajinst}
\end{figure}
\section{Discussion and outlook}
In this article, we investigated the nature of tropical QFT.
After reviewing tropical QFT in general dimensions, we focused on tropicalizations of two-dimensional QFTs
and classified the space of scale-invariant tropical QFTs in that dimension that have a periodic effective action.
One of the main results of this paper is that each of these periodic scale-invariant tropical QFTs can be reached by flowing a sine-Gordon Lagrangian to its IR fixed point. This motivated our detailed study of the two-dimensional tropical sine-Gordon model. 
Mirroring Coleman's seminal work on the sine-Gordon model, we performed the necessary renormalization of the tropical QFT and constructed a convergent expansion in the mass-coupling $\alpha_\ren$ at fixed non-resonant $0<g^2<2$. Building on this, we  numerically solved the PDE, the renormalized tropical loop equation, which completely describes the tropical sine-Gordon model.  This numerical solution allowed us to visualize the full renormalization group flow of the tropical sine-Gordon model from the Gaussian fixed point to its IR fixed point.
Further, we illustrated that information from the local potential approximation from the functional renormalization group framework can be used beneficially to study tropical QFT.

We also investigated the large-loop-order behaviour of the perturbative expansion of a slightly modified version of tropical sine-Gordon QFT which is more amenable to a large-order analysis.
The growth rate of the coefficients is  consistent with factorial growth and exhibits a Borel singularity at $s\simeq1.518$. 
We outlined how to obtain these results via an independent computation: in the large loop-order limit, the tropical loop equation takes the form of a classical Hamilton--Jacobi equation. The solutions, which we call tropical instantons, reproduce the value of $s$ within the accuracy of the perturbative data. 

Our discussion of the large-order aspects and the relationship to classical mechanics remains open-ended.
One natural next step would be to extract other asymptotic information, such as the power-law and constant multiplicative behaviour of the coefficients, and match it to the empirical data from the perturbative solution. This would confirm the approach's validity and help determine which trajectories contribute. In particular, the role of the second regular trajectory remains open. Understanding the associated Stokes data would be necessary to treat the Borel ambiguity. The same idea applies to the bare tropical theory and to other potentials obeying the tropical loop equation.

\subsection*{Acknowledgments}
We thank Paul-Hermann Balduf, Dario Benedetti, David Berenstein, Roland Bittleston, Erik Panzer and Vladimir Kazakov for insightful discussions. The work of JR and SRK is supported by the Natural Sciences and Engineering Research Council (NSERC) of Canada. Research at Perimeter Institute is supported in part by the Government of Canada through the Department of Innovation, Science and Economic Development Canada and by the Province of Ontario through the Ministry of Colleges and Universities. The work of JR is also supported by the Villum Young Investigator Programme under project No. 71589. The Center of Gravity is a Center of Excellence funded by the Danish National Research Foundation under grant No. 184.

\subsection*{AI disclosure statement}
The authors used Claude Fable 5.0, 5.1, Opus 5 and GPT-6 Astra to write computer programs for empirical investigations of the generating functions, generating the displayed graphics and the numerical integration of the tropical loop equation using the Ansatz~\eqref{eq:ansatz_numerical}. Fable 5.0 was used for the derivation of the instanton equation ~\eqref{eq:HamJac}, its subsequent simplification and writing the code for the numerical integration of~\eqref{eq:ELsystem}. Further, we used these AI tools for literature research, to check our arguments and the general write-up for typos and errors. Of course, the authors take full responsibility for all claims and arguments in the paper.

\printbibliography

\end{document}